\newcommand{\E}{\mathbb{E}}
\newcommand{\R}{\mathbb{R}}
\theoremstyle{plain} 
\newtheorem{theorem}{Theorem}
\newtheorem{lemma}[theorem]{Lemma}
\theoremstyle{definition}
\theoremstyle{remark} 
\newtheorem{remark}{Remark}
\title{Central limit theorem for the overlaps\\on the Nishimori line}
\author{Francesco Camilli$^{\dagger}$, Pierluigi Contucci$^{\diamond}$, Emanuele Mingione$^\diamond$}
\affil{\small $\dagger$ \emph{Abdus Salam International Center for Theoretical Physics, Italy}\\ 
$\diamond$ \emph{Dipartimento di Matematica, Alma Mater Studiorum - Università di Bologna}}
\begin{document}

\maketitle

{
\let\thefootnote\relax\footnotetext{\url{fcamilli@ictp.it}, \url{pierluigi.contucci@unibo.it}, \url{emanuele.mingione2@unibo.it}}
}

\vspace{-10pt}

\abstract{The overlap distribution of the Sherrington-Kirkpatrick model on the Nishimori line has been proved to be self averaging for large volumes. Here we study the joint distribution of the rescaled overlaps around their common mean and prove that it converges to a Gaussian vector.}

\section{Introduction}
Mean-field models of statistical mechanics are expected to be described by an order parameter and its fluctuation properties. In the deterministic cases, like the Curie-Weiss model and its variants, the magnetization concentrates almost everywhere at large volumes in the plane of the temperature and magnetic field i.e. it satisfies the law of large numbers (LLN). The critical half-line, low temperatures and zero external field, is the exception to such concentration and presents two coexisting thermodynamic phases with a spin-flip spontaneously broken symmetry. The presence of the two phases can be seen as a mild violation of the LLN. Outside the critical line, where the magnetization distribution converges to a delta function the fluctuations follow the central limit theorem (CLT), while at the critical point the distribution scales with the power 3/4 of the volume and converges to a quartic distribution \cite{EllisNewman78,ellis_book_largedev}.
The mean-field spin-glass case, the Sherrington Kirkpatrick (SK) model, described by the Parisi theory \cite{MPV}, has the peculiarity of a structural violation of the LLN at low temperature and small external field \cite{Pastur91}. The overlap, the order parameter of the model, is the scalar product between two independent spin configurations sampled from the Boltzmann-Gibbs measure with a fixed realization of the disorder. The permutation invariant overlap array among an arbitrary number of identical copies (real replicas) of the system has a non trivial limiting joint distribution characterized by specific factorization properties \cite{ACid,GG_original,GG_contucci_Giardina}, in particular ultrametricity \cite{MPV,baffioni2000some}. 
Parisi theory has been rigorously completed by a series of results \cite{Guerra_upper_bound,Tala_parisiformula,Panchenkoultra, 2015sherrinpanchenkogton}. At high temperatures and large magnetic fields, when the overlap concentrates \cite{Tala_vol2,ChenATsharp,Jagannathbasco}, it has been proved that the fluctuations are normal \cite{GuerraToninelliCLT, Tala_vol2}. 
We mention that, in  the presence of quenched disorder, also  the  fluctuations  of the free energy play an important role in the characterization of the thermodynamic properties of the system \cite{ALR88,Chattebook, PanWeDey}. 
The region where the overlap concentrates extends also beyond the one we mentioned. An important role is played by the region where one allows random interactions with positive mean: 
\begin{align}
    \label{Ham_SK}
    H_N(\boldsymbol{\sigma})= - \sum_{i<j=1}^N \tilde{z}_{ij}\sigma_i\sigma_j
\end{align}
where $\tilde{z}_{ij}\iid\mathcal{N}(\lambda/N,\tilde{\lambda}/N)$, for some $\lambda,\tilde{\lambda}>0$ and $\boldsymbol{\sigma}=(\sigma_i)_{i\leq N}\in\{-1,+1\}^N$, i.e. the sum of Curie-Weiss and SK models. The free energy of such model, that turns out to have a Parisi-like representation, has been studied in \cite{Chenferromagnetic14} and generalized in \cite{Camilli_mismatch}. 

In this paper we focus on the celebrated special case defined by $\lambda=\tilde{\lambda}$ which coincides with the \textit{Nishimori line}, that emerged for the first time in the gauge theory of spin-glasses (see \cite{nishimori01} for a classical reference). Within such line the model fulfills a set of remarkable properties consisting of identities and correlation inequalities \cite{contucci_giardina_2012,Nishi_id_PC,contucci_morita_nishimori}. Among the consequences of those properties, there is the control of the asymptotic behaviour of the overlap, that turns out to be self-averaging \cite{MSKNL,DBMNL,adaptive,Barbier_2019,overlap_jean}, as predicted by the replica theory.
The relevance of model \eqref{Ham_SK} is also related to the remarkable correspondence between genuine statistical mechanics models and high-dimensional inference problems \cite{nishimori01,mezard2009information}. For example, the model \eqref{Ham_SK} on the Nishimori line corresponds to an instance of the spiked Wigner model \cite{Johnstone_WSM} defined as the problem of the reconstruction of a binary signal $\boldsymbol{\sigma}^*\in\{+1,-1\}^N$, from the set of noisy observations
\begin{align}\label{channel00}
    y_{ij}=\sqrt{\frac{\lambda}{N}}\sigma_i^*\sigma_j^*+z_{ij}\qquad i<j, \,,\quad z_{ij}\iid\mathcal{N}(0,1)
\end{align}with some $\lambda >0$.
The Bayesian approach suggests to study the posterior measure for $\boldsymbol{\sigma}^*$ given the $y_{ij}$'s. Exploiting the Gaussian nature of the noise $(z_{ij})_{\leq i<j\leq N}$, we can write the posterior as
\begin{align}
    P(\boldsymbol{\sigma}\mid \mathbf{y})=\frac{1}{\mathcal{Z}(\mathbf{y})}\exp\Big[-\frac{1}{2}\sum_{1\leq i<j\leq N}\Big(y_{ij}-\sqrt{\frac{\lambda}{N}}\sigma_i\sigma_j\Big)^2\Big]\,.
\end{align}
Re-absorbing trivial terms into the normalization, that is a partition function, and plugging \eqref{channel00} into the previous equation, one readily gets
\begin{align}
    P(\boldsymbol{\sigma}\mid \mathbf{y})=\frac{1}{\mathcal{Z}(\mathbf{z},\boldsymbol{\sigma}^*)}\exp\Big[\sum_{1\leq i<j\leq N}\Big(\sqrt{\frac{\lambda}{N}}z_{ij}+\frac{\lambda}{N}\sigma_i^*\sigma_j^*\Big)\sigma_i\sigma_j\Big]\,.
\end{align}
The latter can also be interpreted as a random Boltzmann-Gibbs measure, and as such we know that many of its features are encoded in its \emph{pressure}
\begin{align}
    p_N(\lambda)=\frac{1}{N}\mathbb{E}_{\mathbf{z},\boldsymbol{\sigma}^*}\log\mathcal{Z}(\boldsymbol{\sigma}^*,\mathbf{z})\,.
\end{align}
Notice now that after the $\mathbb{Z}_2$ gauge transformation $ 
    z_{ij}\mapsto z_{ij}\sigma_i^*\sigma_j^*\,,\quad \sigma_i\mapsto\sigma_i\sigma_i^*\,,
$ one actually gets rid of the original signal $\boldsymbol{\sigma}^*$ in the pressure:
\begin{align}
    p_N(\lambda)=\frac{1}{N}\mathbb{E}_{\mathbf{z}}\log\sum_{\boldsymbol{\sigma}\in\{-1,1\}^N}\exp\Big[\sum_{1\leq i<j\leq N}\Big(\sqrt{\frac{\lambda}{N}}z_{ij}+\frac{\lambda}{N}\Big)\sigma_i\sigma_j\Big]\,,
\end{align}
The random variable inside the round parenthesis is exactly a Gaussian with mean equal to its variance, which characterizes the Nishimori line in spin-glasses. 
The analogy is  not limited to the example above, but it can be extended to other types of spins, and many techniques coming from rigorous Statistical Mechanics, such as the cavity method \cite{Tala_vol1,ASS}, or the interpolation scheme \cite{interp_guerra_2002,Guerra_upper_bound} can actually be successfully transferred to inference \cite{COJAOGHLANcavity,adaptive}.

In the present work we study the rescaled and centered overlap array distribution on the Nishimori line for spins with bounded support measure and prove that it converges to a Gaussian vector for large volumes. Previous work on the topic had computed its second moment and proved a CLT for the free energy \cite{ElAlaoui}. Here we compute the entire overlap joint distribution by controlling the convergence of the limiting characteristic function.

The paper is organised as follows. Section \ref{sec:Setting} introduces the notations and the main result. Section \ref{sec:tools} presents the cavity method and the preliminary concentration results needed to obtain a convergence in distribution. Section \ref{sec:proof2repplica} proves the main theorem, first for the single overlap, then generalized in Section \ref{sec:proofmain} to an arbitrary number of replicas. The Appendix collects some technical parts of the proof of the main theorem.

\section{Definitions and main result}\label{sec:Setting}
From the Statistical Mechanics point of view, the model we want to study is completely characterized by the following Hamiltonian
\begin{align}
    \label{eq:Hamiltonian}
    -H_N(\mathbf{x};\mathbf{x}^*,\mathbf{z})=\sum_{i<j,1}^N\Big[\sqrt{\frac{\lambda}{N}}z_{ij}x_ix_j+\frac{\lambda}{N}x_ix_i^*x_jx_j^*-\frac{\lambda}{2N}x_i^2x_j^2\Big]\,,
\end{align}
where $z_{ij}=z_{ji}\iid\mathcal{N}(0,1)$ and $x_i^*\iid P_X$. The $\mathbf{z}$ and $\mathbf{x}^*$ are random variables that play the role of quenched disorder, so for a given realization of them the state of the system is determined by the $\mathbf{x}$. We further take $P_X$ with bounded support, though this assumption could be relaxed, and we endow the site variables $x_i$, or the spins, with the same apriori measure $P_X$.

Once the Hamiltonian is given, according to Boltzmann's prescription, we have a corresponding measure on the site variables, the Boltzmann-Gibbs measure, defined as follows:
\begin{align}
    \label{eq:BGmeasure}
    d\mu_N(\mathbf{x};\mathbf{z},\mathbf{x}^*):=\frac{dP_X(\mathbf{x})}{\mathcal{Z}_N(\mathbf{z},\mathbf{x}^*)}\exp\Big(-H_N(\mathbf{x};\mathbf{x}^*,\mathbf{z})\Big)\,,\quad dP_X(\mathbf{x})\equiv\prod_{i=1}^NdP_X(x_i)\,.
\end{align}
The normalization, denoted with $\mathcal{Z}$, will be referred to as partition function, and it depends on the quenched randomness $(\mathbf{z},\mathbf{x}^*)$. Its logarithm, re-scaled by the number of sites, defines an object called pressure per particle, or simply pressure, in Statistical Mechanics literature:
\begin{align}
    p_N(\mathbf{z},\mathbf{x}^*)=\frac{1}{N}\log\mathcal{Z}(\mathbf{z},\mathbf{x}^*)\,,
\end{align}
not to be confused with the pressure of a gas. Notice that both expectations w.r.t. the measure \eqref{eq:BGmeasure} and the pressure are random objects. One could indeed take a further expectation w.r.t. the quenched disorder. If we use the standard bracket notation for Boltzmann averages, \emph{i.e.} for any, say bounded, test function $f(\mathbf{x})$ of the site variables
\begin{align}
    \langle f\rangle_{\mathbf{z},\mathbf{x}^*}=\int d\mu_N(\mathbf{x};\mathbf{z},\mathbf{x}^*)f(\mathbf{x})
\end{align}
the so-called quenched averages would appear as
\begin{align}
    \mathbb{E}_{\mathbf{z},\mathbf{x}^*}\langle f \rangle_{\mathbf{z},\mathbf{x}^*}=\mathbb{E}_{\mathbf{z},\mathbf{x}^*}\int d\mu_N(\mathbf{x};\mathbf{z},\mathbf{x}^*)f(\mathbf{x})\,.
\end{align}
Taking the expectation on the random pressure per particle gives instead rise to what shall be referred to as quenched pressure per particle, or free entropy
\begin{align}
    \Bar{p}_N(\lambda)=\frac{1}{N}\mathbb{E}_{\mathbf{z},\mathbf{x}^*}\log\mathcal{Z}(\mathbf{z},\mathbf{x}^*)\,.
\end{align}
To state our main result we also need to define the replicated Boltzmann-Gibbs measure, that will be needed to average over $n\in\mathbb{N}$ independent \emph{i.i.d.} samples from the Gibbs measure:
\begin{align}
    \label{eq:replicatedBGmeasure}
    d\mu_N^{\bigotimes n}((\mathbf{x}^{(a)})_{a=1}^n;\mathbf{z},\mathbf{x}^*)=\prod_{a=1}^n\frac{dP_X(\mathbf{x}^{(a)})}{\mathcal{Z}(\mathbf{z},\mathbf{x}^*)}\exp\Big(-H_N(\mathbf{x}^{(a)};\mathbf{z},\mathbf{x}^*)\Big)\,.
\end{align}
With an abuse of notation we shall denote expectations w.r.t. the replicated measure still with $\langle\cdot\rangle_{\mathbf{z},\mathbf{x}^*}$. Notice that replicas, namely the i.i.d. samples, share the same quenched disorder. Hence they are independent only conditionally on $\mathbf{z},\mathbf{x}^*$, and a further quenched expectation w.r.t. the latter would couple them. In other words, for any bounded test functions $f(\mathbf{x}^{(1)},\dots,\mathbf{x}^{(j)})$, $g(\mathbf{x}^{(j+1)},\dots,\mathbf{x}^{(n)})$ one has
\begin{multline}
    \mathbb{E}_{\mathbf{z},\mathbf{x}^*}\langle f(\mathbf{x}^{(1)},\dots,\mathbf{x}^{(j)})g(\mathbf{x}^{(j+1)},\dots, \mathbf{x}^{(n)})\rangle_{\mathbf{z},\mathbf{x}^*}=\mathbb{E}_{\mathbf{z},\mathbf{x}^*}\langle f(\mathbf{x}^{(1)},\dots,\mathbf{x}^{(j)})\rangle_{\mathbf{z},\mathbf{x}^*}\langle g(\mathbf{x}^{(j+1)},\dots, \mathbf{x}^{(n)})\rangle_{\mathbf{z},\mathbf{x}^*}\\
    \neq \mathbb{E}_{\mathbf{z},\mathbf{x}^*}\langle f(\mathbf{x}^{(1)},\dots,\mathbf{x}^{(j)})\rangle_{\mathbf{z},\mathbf{x}^*}\mathbb{E}_{\mathbf{z},\mathbf{x}^*}\langle g(\mathbf{x}^{(j+1)},\dots, \mathbf{x}^{(n)})\rangle_{\mathbf{z},\mathbf{x}^*}\,.
\end{multline}
In order to lighten the notation, from now on the subscripts $\mathbf{z},\mathbf{x}^*$ will be dropped.
Notice that the distribution of the overlap function is  invariant  under permutations even at finite volumes.
The main quantity under investigation is the \emph{overlap} family
\begin{align}
    \label{eq:overlap_family}
    q_{ab}=\frac{1}{N}\sum_{i=1}^Nx_{i}^{(a)}x_i^{(b)}\,,\quad 0\leq a <b\leq n
\end{align}
where the $0$-th replica is by convention $\mathbf{x}^*$. 

The model defined by the Hamiltonian \eqref{eq:Hamiltonian} can be regarded as an extension of the SK model on the Nishimori line to other type of spins. The Hamiltonian \eqref{eq:Hamiltonian} indeed arises in inference in the so-called spiked Wigner model. In this very popular inference problem, a Statistician has to retrieve a rank-one spike, of the form $\mathbf{x}^*\mathbf{x}^{*\intercal}/\sqrt{N}$, from the set of observations
\begin{align}\label{eq:WSchannel}
    y_{ij}=\sqrt{\frac{\lambda}{N}}x_i^*x_j^*+z_{ij}\,,\quad 1\leq i<j\leq N\,.
\end{align}
If the Statistician is \emph{optimal}, meaning they know everything about the generating process of the $y_{ij}$ ($P_X$, $\lambda$, the relation \eqref{eq:WSchannel}), then they have access to the Bayes posterior
\begin{align}
    dP_{\mathbf{x^*|\mathbf{y}}}(\mathbf{x})\propto dP_X(\mathbf{x})\exp\Big[-\frac{1}{2}\sum_{i<j}^N\Big(\sqrt{\frac{\lambda}{N}}x_ix_j-y_{ij}\Big)^2\Big]\,.
\end{align}
By plugging \eqref{eq:WSchannel} into the previous equation, after computing the square one realizes that the posterior measure is nothing but the Boltzmann-Gibbs measure \eqref{eq:BGmeasure}, with a normalization constant being exactly $\mathcal{Z}(\mathbf{z},\mathbf{x}^*)$. The optimality of the Statistician carries on some properties that are inherited by expectations of observables called Nishimori identities, that can be stated as follows. Given a bounded test function $f(\mathbf{y},\mathbf{x}^*,(\mathbf{x})_{a=2}^{n})$, it holds that:
\begin{align}
    \label{eq:NishimoriID}
    \mathbb{E}\langle f(\mathbf{y},\mathbf{x}^*,(\mathbf{x})_{a=2}^{n})\rangle=\mathbb{E}\langle f(\mathbf{y},(\mathbf{x})_{a=1}^{n})\rangle\,,
\end{align}
namely we can replace $\mathbf{x}^*$ with an additional independent replica. This has already some interesting consequences on the overlap family:
\begin{align}
    \mathbb{E}\langle g(q_{ab}) \rangle=\mathbb{E}\langle g(q_{10}) \rangle\,,\quad  1\leq a<b\leq n\,,
\end{align}
for any bounded test function $g$. The Nishimori identities, in all of their generality, can be seen as a consequence of Bayes rule. A quick proof can be found in \cite{Lelarge2017FundamentalLO}. In this setting with Gaussian disorder $\mathbf{z}$ it is possible to obtain them, starting from the r.h.s. of \eqref{eq:NishimoriID}, directly through the change of variables
\begin{align}
    z_{ij}\mapsto z_{ij}-\sqrt{\frac{\lambda}{N}}x_i^*x_i^*\,.
\end{align}

The thermodynamic limit of our model, and its generalizations were intensively studied in recent years \cite{lesieur2015mmse,el2018estimation,adaptive,ElAlaoui} with rigorous approaches, that produced a replica symmetric formula for the limit of $\bar{p}_N(\lambda)$. The term \enquote{replica symmetric} refers to the fact that the order parameter of the model concentrates around its expected value, which in turn implies a finite dimensional variational principle for the quenched pressure. For this model the order parameter is the overlap between $\mathbf{x}^*$ and $\mathbf{x}$, which converges to its expectation in the thermodynamic limit in a proper sense \cite{overlap_jean}. More specifically, the following result holds:
\begin{align}
    \label{eq:RSformula}
    p(\lambda):=\lim_{N\to\infty}\bar{p}_N(\lambda)=\sup_{q\geq 0}\Big\{
    -\frac{\lambda q^2}{4}+\mathbb{E}\log\int dP_X(x)e^{(\sqrt{\lambda q}z+\lambda q x^*)x-\frac{\lambda q x^2}{2}}
    \Big\}\,,
\end{align}
with $z\sim\mathcal{N}(0,1)$. Similarly to what happens for spin-glass models on the Nishimori line, the optimality of the Statistician constrains $\Bar{p}_N(\lambda)$ to be non-decreasing and convex in $\lambda$ \cite{contucci_morita_nishimori}, and so will be its limit. Hence $p(\lambda)$ is twice differentiable almost everywhere by Alexandroff's lemma. It can be proved that if $p$ is twice differentiable at a given $\lambda$, then the supremum is uniquely attained at a value $\bar{q}(\lambda)$ satisfying the consistency equation
\begin{align}
    \label{eq:consistency}
    q=F(\lambda q)\,,\quad F(r)=\mathbb{E}\,x^*\frac{\int dP(x)\,x\,e^{(\sqrt{r}z+r x^*)x-\frac{r}{2}x^2}}{\int dP(x)e^{(\sqrt{r}x+r x^*)x-\frac{r}{2}x^2}}\equiv\mathbb{E}\,x^*\langle x\rangle_{r,z,x^*}\,,
\end{align}
and $\bar{q}(\lambda)$ can be identified as the asymptotic expected overlap between $\mathbf{x}^*$ and $\mathbf{x}$ \cite{Lelarge2017FundamentalLO}.

The optimization w.r.t. $q$ can give rise to phase transitions in the model, that are typically controlled by $\lambda$. From the inferential point of view, $\lambda$ is the signal-to-noise ratio. With the present scalings, with a signal that has a diverging number of components to be estimated, it can indeed happen that if $\lambda$ is not big enough it is not possible to retrieve the spike signal. In the $N\to\infty$ limit this threshold will be given by
\begin{align}
    \label{eq:deflambda_C}
    \lambda_c=\sup\{\lambda >0 \,:\, \bar{q}(\lambda)>0\}\,.
\end{align}
Notice that for some kind of priors, for instance not centered ones, we could have $\lambda_c=0$. In fact, in order to obtain a non-trivial alignment with $\mathbf{x}^*$ it would be sufficient to sample directly from $P_X$.

The very same model could be given with some additional one-body terms that do not break the Nishimori identities \eqref{eq:NishimoriID}, as follows:
\begin{align}
    \label{eq:Hamiltonian_complete}
    \begin{split}
        -\tilde H_N(\mathbf{x};\mathbf{x}^*,\mathbf{z},\mathbf{h})&=\sum_{i<j,1}^N\Big[\sqrt{\frac{\lambda}{N}}z_{ij}x_ix_j+\frac{\lambda}{N}x_ix_i^*x_jx_j^*-\frac{\lambda}{2N}x_i^2x_j^2\Big]\\
        &+\sum_{i=1}^N\Big(\sqrt{h}h_ix_i+hx_i^*x_i-\frac{h x_i^2}{2}\Big)\,,
    \end{split}
\end{align}
where $h_i\iid\mathcal{N}(0,1)$ and independent on $\mathbf{z}$. It is easy to verify then that the associated Boltzmann-Gibbs measure is the posterior measure of a spiked Wigner model with the additional \enquote{side information}:
\begin{align}
    \tilde y_i=\sqrt{h}x_i^*+h_i\,.
\end{align}
For this model everything is unchanged, including uniqueness and convexity properties, except for the variational formula for the pressure,
\begin{align}
    p(\lambda,h):=\lim_{N\to\infty}\bar{p}_N(\lambda,h)=\sup_{q\geq 0}\Big\{
    -\frac{\lambda q^2}{4}+\mathbb{E}\log\int dP_X(x)e^{(\sqrt{\lambda q+h}z+(\lambda q+h) x^*)x-\frac{(\lambda q+h) x^2}{2}}
    \Big\}\,,
\end{align}
and the consistency equation, that becomes
\begin{align}
    q=F(\lambda q+h)\,,
\end{align}
whose solution, when unique, is denoted by $\bar{q}(\lambda,h)$.
This variant of the original problem \eqref{eq:Hamiltonian}  will be used only in the proofs.

We can finally state our main result, that is a Central Limit Theorem for the family of the overlaps:
\begin{theorem}\label{th:main}
Consider the model with Hamiltonian \eqref{eq:Hamiltonian}, $\lambda$ in the set where the convex function $p(\lambda)$ is twice differentiable, and the random array 
\begin{equation}\label{rarray}
\boldsymbol{\xi}=(\xi_{ab})_{1\leq a<b\leq n}\,,\quad \xi_{ab}=\sqrt{N}(q_{ab}-\bar{q}(\lambda)).
\end{equation}
If either $\lambda <\lambda_c$ or $P_X$ is not symmetric about the origin then 
\begin{equation}
 \boldsymbol{\xi}\xrightarrow[N\to\infty]{\mathcal{D}}\mathcal{N}(0,\Sigma)
\end{equation}
w.r.t. the quenched measure $\E\langle\cdot\rangle$, for a suitable covariance matrix
 $\Sigma=(\Sigma_{ab,cd})$ with $1\leq a<b\leq n, 1\leq c<d\leq n$.
\end{theorem}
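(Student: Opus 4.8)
The plan is to prove convergence in distribution by showing that the characteristic function $\mathbb{E}\langle \exp(i\sum_{a<b} t_{ab}\xi_{ab})\rangle$ converges, for every fixed $\mathbf{t}=(t_{ab})$, to $\exp(-\tfrac12 \mathbf{t}^\intercal \Sigma\,\mathbf{t})$ with a positive semidefinite $\Sigma$. I would first carry out the argument for a single overlap, say $\xi_{10}$ (equivalently, by the Nishimori identity \eqref{eq:NishimoriID}, any $\xi_{ab}$), and only afterwards promote it to the full array. The natural tool is the cavity method: decompose $q_{ab}=\tfrac{1}{N}\sum_{i\le N-1} x_i^{(a)}x_i^{(b)}+\tfrac1N x_N^{(a)}x_N^{(b)}$, writing the $N$-spin Gibbs measure as a perturbation of the $(N-1)$-spin one with the cavity field on site $N$. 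This produces a recursion for the characteristic function where the increment involves the cavity field and the overlap fluctuations at volume $N-1$; the one-body terms in \eqref{eq:Hamiltonian_complete} are there precisely to regularize this recursion, and the twice-differentiability hypothesis guarantees that $\bar q(\lambda)$ solves \eqref{eq:consistency} uniquely so the centering is the correct one.

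The key steps, in order, are: (i) establish the concentration estimates for the overlap — that $\mathbb{E}\langle (q_{10}-\bar q)^2\rangle = O(1/N)$ and suitable higher-moment bounds — which hold on the Nishimori line thanks to the correlation inequalities and are essentially the content of the cited self-averaging results \cite{overlap_jean,ElAlaoui}; these guarantee tightness of $\boldsymbol{\xi}$ and control of the error terms. (ii) Expand the cavity recursion: adding one spin changes $\log\mathcal Z$ by a term governed by a Gaussian cavity field with effective intensity $\lambda\bar q$, and Gaussian integration by parts (Stein's lemma) plus a Taylor expansion in $1/\sqrt N$ converts the recursion for $\varphi_N(\mathbf t):=\mathbb{E}\langle e^{i\mathbf t\cdot\boldsymbol\xi}\rangle$ into $\varphi_N(\mathbf t)-\varphi_{N-1}(\mathbf t)\approx -\tfrac1N c(\mathbf t)\varphi_{N-1}(\mathbf t)+o(1/N)$ for an explicit quadratic form $c(\mathbf t)=\tfrac12\mathbf t^\intercal\Sigma\,\mathbf t$. (iii) Solve this discrete ODE: $\varphi_N(\mathbf t)\to e^{-c(\mathbf t)}$. (iv) For the multi-replica statement, repeat the cavity computation with $n$ coupled replicas sharing the disorder; the cross terms between replicas $a$ and $b$ in the cavity expansion generate the off-diagonal entries $\Sigma_{ab,cd}$, and symmetry/Nishimori identities pin down their structure. (v) Finally, check that the limiting matrix $\Sigma$ is genuinely a covariance matrix (positive semidefinite), which follows from it being the limit of the covariance matrices of the $\xi_{ab}$'s, finite by the $L^2$ concentration in (i).

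The main obstacle I expect is controlling the remainder in the cavity expansion with enough uniformity to close the recursion — specifically, showing that the error terms accumulated over the $N$ steps are genuinely $o(1)$. This requires that the concentration bounds from step (i) be quantitative and uniform in $N$, and that the cavity fields have all moments under control (here the bounded-support assumption on $P_X$ is used). A second, subtler point is the role of the hypothesis ``$\lambda<\lambda_c$ or $P_X$ not symmetric'': when $\bar q>0$ (or the prior breaks the spin-flip symmetry) the cavity field genuinely tilts the measure and the recursion coefficient $c(\mathbf t)$ is the right Gaussian exponent; in the degenerate symmetric case at $\bar q=0$ the naive centering and scaling would fail, so the argument must explicitly invoke this hypothesis to ensure the effective field does not vanish. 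Handling the passage from single-overlap to the array without losing this control — i.e., ensuring the joint characteristic function recursion still closes once replicas are coupled by the quenched average — is where I anticipate most of the technical work, and is likely where the Appendix estimates referenced in the introduction are deployed.
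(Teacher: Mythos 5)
Your overall strategy (Lévy continuity, cavity method, overlap concentration) is the same as the paper's, but step (ii) of your plan contains a genuine gap: the cavity expansion of the characteristic function does \emph{not} close on $\varphi_N$ alone. After one application of the cavity derivative \eqref{derivata_nu} to $i\nu(\xi_{1*}e^{iu\xi_{1*}})$, the right-hand side contains terms such as $\nu_0\left(\xi_{12}^-e^{iu\xi_{1*}^-}\right)$ and $\nu_0\left(\xi_{23}^-e^{iu\xi_{1*}^-}\right)$, i.e.\ correlations of the test exponential with overlaps of \emph{other} replica pairs, and at finite $N$ these are not of the form $(\text{quadratic form in }u)\times\varphi_N(u)$. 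The paper resolves this by introducing the auxiliary functions $\psi_N$ and $\zeta_N$, running the same cavity computation on each of them, and arriving at the linear system \eqref{eq:system_compact}, which must then be inverted. This is where the twice-differentiability of $p$ does real work (it guarantees $\mu_1(\lambda)<1$, hence invertibility of $\mathbbm{1}-\lambda M$), and the residual non-invertible values of $\lambda$ are recovered by moving along the constant-overlap trajectories of Theorem \ref{overlap_invarianceThm} in the $(\lambda,h)$ plane. Your proposal assigns twice-differentiability only the role of fixing $\bar q$, and assigns the hypothesis ``$\lambda<\lambda_c$ or $P_X$ not symmetric'' a role it does not play: in the paper that hypothesis enters solely through $c(1)>0$ in the concentration bound \eqref{eq:concentration_elalaoui}, making the exponentially small term negligible at the endpoint $t=1$ of the interpolation, not through any statement about the cavity field vanishing.

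A second, structural divergence: you propose a recursion in the volume, $\varphi_N(\mathbf t)-\varphi_{N-1}(\mathbf t)\approx -\tfrac1N c(\mathbf t)\varphi_{N-1}(\mathbf t)+o(1/N)$, telescoped over $N$ steps. The paper instead works at fixed $N$, interpolates in a parameter $t\in[0,1]$ that decouples a single spin, and derives a differential equation in the Fourier variable $u$ with remainder $O(N^{-1/2})$; Gronwall in $u$ over a fixed compact interval (Lemma \ref{lem:limit_ODE}) then yields the Gaussian limit. Your scheme is strictly harder to control: the per-step error must be $o(1/N)$ uniformly so that it does not accumulate over $N$ steps, and one must also track the $O(1/N)$ change of the coupling $\lambda/N\to\lambda/(N-1)$ between consecutive volumes. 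Neither point is addressed, and neither follows directly from the $L^2$/$L^4$ concentration you invoke (which gives only $O(1/N)$ and $O(1/N^2)$ bounds on moments, i.e.\ exactly the order of the drift term you need to isolate). Unless you supply both the closure of the linear system of auxiliary characteristic functions and a quantitative $o(1/N)$ per-step bound, the argument does not go through as written.
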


\begin{remark}
The permutation invariance among replicas is not broken by the $N\to\infty$ limit. This constrains the elements of the covariance matrix $\Sigma$ to take only three possible values: $\Sigma_{ab,ab}=A$, $\Sigma_{ab,bd}=B$, $\Sigma_{ab,cd}=C$ with $a\neq c$ and $b\neq d$, that are computed in \eqref{covariances}.
\end{remark}

\section{Methods and preliminary results}\label{sec:tools}
In this section we illustrate the main tools that lead to the proof of Theorem \ref{th:main}. The whole proof revolves around the so called \emph{cavity method}, also known sometimes as \emph{leave-one-out method}.  We will  follow Talagrand's construction \cite{Tala_vol1} of the cavity interpolation introduced in the analysis  of the Sherringhton-Kirkpatrick model and subsequently adapted to the inference setting by \cite{ElAlaoui}. Here we repeat this construction for the help of the reader. 

The idea is to study the variation, along a suitable interpolating path, of the expected values of some observables when one spin of the system is \enquote{isolated}, namely decoupled from the other ones.
It turns out that the most convenient choice is an interpolating  path that leaves  the model on the Nishimori line  for any value of $t$.  Since for models like  \eqref{eq:Hamiltonian} the Nishimori line condition is not as explicit as for binary spins, for which it is sufficient to have Gaussian interactions with mean equal to their variance, we need to exploit the inference point of view and construct the Gibbs measure as a posterior. For $t\in[0,1]$ and  $q\geq0$, that will be fixed later on, let us consider the following set of observations: 
\begin{align}\label{channels}
    &\begin{cases}
    y_{ij}=\sqrt{\frac{\lambda }{N}}x_i^*x_j^*+z_{ij}\quad i<j\leq N-1\\
    y_{i,N}=\sqrt{\frac{\lambda t}{N}}x_i^*x_N^*+z_{iN}\quad i\leq N-1\\
    y_N=\sqrt{\lambda(1-t)q}x_N^*+z_N
    \end{cases}
\end{align}
where $z_N$ and $(z_{ij})_{i<j\leq N}$ are all independent copies of a standard Gaussian. 
Given the observations \eqref{channels}, the posterior measure on the $\mathbf{x}$'s is then
\begin{multline}
    dP_{\mathbf{X}|\mathbf{Y}=\mathbf{y}}(\mathbf{x})\propto \prod_{i<j\leq N-1}\exp\left\{
    -\frac{1}{2}\left(\sqrt{\frac{\lambda}{N}}x_ix_j-y_{ij}\right)^2
    \right\}\\
    \times\prod_{i\leq N}\exp\left\{
    -\frac{1}{2}\left(
    \sqrt{\frac{t\lambda}{N}}x_ix_N-y_{iN}
    \right)^2
    \right\}\times\exp\left\{-\frac{1}{2}\left(
    \sqrt{\lambda(1-t)q}x_N-y_N
    \right)^2\right\}\,.
\end{multline}
Expanding the squares and reabsorbing into the normalization those terms that do not depend on $\mathbf{x}$ we get
\begin{multline}
    dP_{\mathbf{X}|\mathbf{Y}=\mathbf{y}}(\mathbf{x})\propto
    \exp\left\{
    \sum_{i<j}^{N-1}\left[\sqrt{\frac{\lambda}{N}}y_{ij}x_ix_j-\frac{\lambda}{2N}x_i^2x_j^2\right]\right.\\\left.+\sum_{i=1}^N\left[
    \sqrt{\frac{t\lambda}{N}}y_{iN}x_ix_N-\frac{\lambda t}{2N}x_i^2x_N^2
    \right]+
    \sqrt{\lambda(1-t)q}y_N x_N-\frac{\lambda(1-t)q}{2}x_N^2
    \right\}\,.
\end{multline}
Now, if we plug \eqref{channels} into the previous equation we can finally identify the cavity Hamiltonian:
\begin{multline}\label{eq:cavity_Hamiltonian}
    -H_t(\mathbf{x})=\sum_{i<j}^{N-1}\left[\sqrt{\frac{\lambda}{N}}z_{ij}x_ix_j+\frac{\lambda}{N}x_i^*x_ix_j^*x_j-\frac{\lambda}{2N}x_i^2x_j^2\right]\\
    +\sum_{i=1}^N\left[
    \sqrt{\frac{t\lambda}{N}}z_{iN}x_ix_N+\frac{t\lambda}{N}x_ix_i^*x_N x_N^*-\frac{\lambda t}{2N}x_i^2x_N^2
    \right]\\+
    \sqrt{\lambda(1-t)q}z_N x_N+\lambda(1-t)qx_Nx_N^*-\frac{\lambda(1-t)q}{2}x_N^2\,.
\end{multline}
It is clear that for $t=1$ we recover the original model \eqref{eq:Hamiltonian} while for $t=0$  the last particle $x_N$ is decoupled from the rest of the system and replaced by a random Gaussian external field whose variance is proportional to $q$, plus a contribution in the direction of the ground truth component $x_N^*$ again proportional to $q$. Let us now tune the value of $q=\bar{q}(\lambda)$ where $\bar{q}(\lambda)$ is the solution of the variational problem \eqref{eq:RSformula}. Notice that, by hypothesis of Theorem \ref{th:main} if $p(\lambda)$ is twice differentiabile at $\lambda$, then  $\bar{q}(\lambda)$ is uniquely defined and satisfies the consistency equation \eqref{eq:consistency}.

The main point is that by construction, the interpolating model is on the Nishimori line for any $t\in[0,1]$, namely the quenched measure $\mathbb{E}\langle \cdot\rangle_t$ satisfies the Nishimori identities \eqref{eq:NishimoriID}, where $\langle\cdot \rangle_t$, with abuse of notation, denotes also the replicated random Boltzmann-Gibbs measure induced by the cavity Hamiltonian \eqref{eq:cavity_Hamiltonian}.

\subsection{Properties of the cavity measure}
In this section we study the properties of the quenched measure $\mathbb{E}\langle\cdot\rangle_t$ that will be denoted for brevity as $\nu_t(\cdot)$. In what follows, since it has a special role,  the last component $x_N^{(l)}$ of the vector $\mathbf{x}^{(l)}$ for  $l=*,1,\dots,n$, will be denoted by  $\epsilon^{(l)}$. We recall that we set $0\equiv\ast$ for replica indices. We also need to introduce a slight modification of the overlap $q_{\ell\ell'}$ in \eqref{eq:overlap_family} :

\begin{equation}\label{eq:defovermen}
 q_{\ell\ell'}^{-}:=\frac{1}{N}\sum_{i=1}^{N-1}x_i^{(\ell)}x_i^{(\ell')}= q_{\ell\ell'}-\frac{1}{N} \epsilon^{(\ell)}\epsilon^{(\ell')}
\end{equation}
for $\ell,\ell'\in\{*,1,\ldots\,,n\}$. The next lemma allows to control how the measure $\nu_t(\cdot)$ changes along the time parameter $t$.
\begin{lemma}\label{lemma_dv}
For any bounded function $f$ of $n$ replicas of the signal and the ground truth the following holds:
\begin{multline}\label{derivata_nu}
\frac{\mathrm{d}}{\mathrm{d} t} \nu_{t}(f)=\frac{\lambda}{2} \sum_{1 \leq l \neq l^{\prime} \leq n} \nu_{t}\left(\left(q_{l, l^{\prime}}^{-}-\bar{q}\right) \epsilon^{(l)} \epsilon^{\left(l^{\prime}\right)} f\right)-\lambda n \sum_{l=1}^{n} \nu_{t}\left((q_{l, n+1}^{-}-\bar{q}\right) \epsilon^{(l)} \epsilon^{(n+1)} f )\\
+\lambda \sum_{l=1}^{n} \nu_{t}\left(\left(q_{l, *}^{-}-\bar{q}\right) \epsilon^{(l)} \epsilon^{*} f\right)-\lambda n \nu_{t}\left(\left(q_{n+1, *}^{-}-\bar{q}\right) \epsilon^{(n+1)} \epsilon^{*} f\right) \\
+\lambda \frac{n(n+1)}{2} \nu_{t}\left(\left(q_{n+1, n+2}^{-}-\bar{q}\right) \epsilon^{(n+1)} \epsilon^{(n+2)} f\right)\,.
\end{multline}
\end{lemma}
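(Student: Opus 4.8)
The plan is to differentiate $\nu_t(f)=\mathbb{E}\langle f\rangle_t$ along $t$ and reduce the result to Gaussian integration by parts in the cavity noises $(z_{iN})_{i\le N-1}$ and $z_N$. Setting $u_t(\mathbf{x}):=\partial_t\big(-H_t(\mathbf{x})\big)$ and differentiating numerator and denominator of the replicated Gibbs average, one gets the elementary identity
\[
\frac{d}{dt}\,\nu_t(f)=\sum_{a=1}^{n}\nu_t\!\big(f\,u_t^{(a)}\big)-n\,\nu_t\!\big(f\,u_t^{(n+1)}\big),
\]
where $u_t^{(a)}$ is $u_t$ evaluated at replica $a$ and $\mathbf{x}^{(n+1)}$ is a fresh replica. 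From \eqref{eq:cavity_Hamiltonian} (the sum over $i$ running up to $N-1$, as dictated by the observations \eqref{channels}),
\[
u_t(\mathbf{x})=\sum_{i\le N-1}\Big[\tfrac{1}{2}\sqrt{\tfrac{\lambda}{tN}}\,z_{iN}x_ix_N+\tfrac{\lambda}{N}x_ix_i^*x_Nx_N^*-\tfrac{\lambda}{2N}x_i^2x_N^2\Big]-\tfrac{1}{2}\sqrt{\tfrac{\lambda\bar q}{1-t}}\,z_Nx_N-\lambda\bar q\,x_Nx_N^*+\tfrac{\lambda\bar q}{2}x_N^2.
\]
I would treat the six groups of terms of $u_t$ separately, using that $\mathbf{x}^*$ enters only through the quenched expectation and is independent of the noises, so that $\epsilon^*=x_N^*$ can appear only through the two terms of $u_t$ that already carry $x_N^*$.

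For the two Gaussian groups I would apply $\mathbb{E}\big[z\,\langle\cdot\rangle_t\big]=\mathbb{E}\big[\partial_z\langle\cdot\rangle_t\big]$, recalling that $z_{iN}$ enters each replica's Boltzmann factor through $\sqrt{t\lambda/N}\,z_{iN}x_ix_N$ and $z_N$ through $\sqrt{\lambda(1-t)\bar q}\,z_Nx_N$. The prefactors $\tfrac12\sqrt{\lambda/(tN)}$ and $\tfrac12\sqrt{\lambda\bar q/(1-t)}$ then combine with these into $\tfrac{\lambda}{2N}$ and $\tfrac{\lambda\bar q}{2}$, so the singular factors $1/\sqrt t$ and $1/\sqrt{1-t}$ cancel and the computation can be carried out for $t\in(0,1)$ and extended to $[0,1]$ by continuity. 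Using the exact identity $\tfrac1N\sum_{i\le N-1}x_i^{(\ell)}x_i^{(\ell')}=q^-_{\ell\ell'}$ from \eqref{eq:defovermen}, the sums over $i$ collapse to cavity overlaps, and the integration by parts produces, from the $z_{iN}$'s, the $q^-_{\ell\ell'}$-weighted terms matching the first, second and fifth terms of \eqref{derivata_nu} and, from $z_N$, the corresponding $\bar q$-weighted terms that complete the combinations $q^-_{\ell\ell'}-\bar q$; besides these, both steps also yield diagonal contributions ($\ell=\ell'$), respectively $\propto q^-_{\ell\ell}(\epsilon^{(\ell)})^2$ and $\propto\bar q\,(\epsilon^{(\ell)})^2$. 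The non-Gaussian groups contribute directly: $\tfrac{\lambda}{N}\sum_{i\le N-1}x_ix_i^*x_Nx_N^*-\lambda\bar q\,x_Nx_N^*$ evaluates to $\lambda\big(q^-_{\ell *}-\bar q\big)\epsilon^{(\ell)}\epsilon^*$, which after the replica sum gives exactly the third and fourth terms of \eqref{derivata_nu}, while $-\tfrac{\lambda}{2N}\sum_{i\le N-1}x_i^2x_N^2+\tfrac{\lambda\bar q}{2}x_N^2$ contributes only further diagonal pieces.

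It then remains to check that all diagonal contributions cancel. The $+\tfrac{\lambda}{2}q^-_{\ell\ell}(\epsilon^{(\ell)})^2$ produced by the $z_{iN}$ integration by parts cancels the $-\tfrac{\lambda}{2}q^-_{\ell\ell}(\epsilon^{(\ell)})^2$ coming from $-\tfrac{\lambda}{2N}x_i^2x_N^2$, while the leftover $\pm\tfrac{\lambda\bar q}{2}(\epsilon^{(\ell)})^2$ pieces arising from $+\tfrac{\lambda\bar q}{2}x_N^2$ and from the diagonal ($b=\ell$) term of the $z_N$ integration by parts cancel against one another; all of this holds for the fresh replica $n+1$ as well. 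Collecting the surviving $q^-_{\ell\ell'}-\bar q$ terms and relabelling replica indices then yields \eqref{derivata_nu}; in particular the coefficient $-\lambda n$ of the $q^-_{\ell,n+1}$ term arises as the sum of two equal contributions, one from the fresh-replica term in $\sum_{a\le n}\nu_t(f\,u_t^{(a)})$ and one from the internal sum in $-n\,\nu_t(f\,u_t^{(n+1)})$.

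I expect the integrations by parts to be routine; the delicate point — and the one I would carry out most carefully — is the replica bookkeeping: under $\partial_z$ a Gibbs bracket containing $k$ distinct replicas generates an internal sum over those $k$ replicas minus $k$ copies of a fresh one, so that $u_t^{(a)}$ with $a\le n$ sees the $n$ replicas of $f$ plus itself, whereas $u_t^{(n+1)}$ sees $n+1$ and spawns replica $n+2$. Correctly attributing to each unordered pair $(\ell,\ell')$ its coefficient $\tfrac{\lambda}{2}$, $-\lambda n$, $\lambda$ or $\tfrac{\lambda n(n+1)}{2}$, and verifying the diagonal cancellation above, is where an error could hide. Note that every identity here is exact at finite $N$ — no $O(1/N)$ remainder appears because $\tfrac1N\sum_{i\le N-1}x_i^{(\ell)}x_i^{(\ell')}=q^-_{\ell\ell'}$ identically — and that the Nishimori structure of $\nu_t$ is not needed for this lemma, being used only later.
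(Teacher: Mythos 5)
Your proposal is correct and follows essentially the same route as the paper: the paper separates the noise terms from the deterministic terms by introducing an auxiliary two-parameter Hamiltonian $H_{t,s}$, applying Gaussian integration by parts (via \cite[Lemma 1.4.2]{Tala_vol1}) to the $s$-derivative and computing the $t$-derivative directly before setting $s=t$, which is exactly your split of $u_t$ into the Gaussian groups (handled by integration by parts, with the $1/\sqrt{t}$ and $1/\sqrt{1-t}$ factors cancelling) and the non-Gaussian groups (contributing directly). Your replica bookkeeping, the origin of the coefficient $-\lambda n$, and the cancellation of the diagonal $(\epsilon^{(\ell)})^2$ contributions all match the paper's computation, where that same cancellation occurs between the $s$- and $t$-derivative formulas.
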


\begin{proof}
The proof essentially follows from \cite[Lemma 1.4.2]{Tala_vol1} that is based on Gaussian integration by parts. However, our Hamiltonian contains two sources of disorder: the Gaussian noise in the $\mathbf{z}$'s, and the ground truth $\mathbf{x}^*$. Let us introduce the auxiliary Hamiltonian
\begin{multline}
    -H_{t,s}(\mathbf{x})=\sum_{i<j}^{N-1}\left[\sqrt{\frac{\lambda}{N}}z_{ij}x_ix_j+\frac{\lambda}{N}x_i^*x_ix_j^*x_j-\frac{\lambda}{2N}x_i^2x_j^2\right]\\
    +\sum_{i=1}^N\left[
    \sqrt{\frac{s\lambda}{N}}z_{iN}x_ix_N+\frac{t\lambda}{N}x_ix_i^*x_N x_N^*-\frac{\lambda t}{2N}x_i^2x_N^2
    \right]\\+
    \sqrt{\lambda(1-s)q}z_N x_N+\lambda(1-t)qx_Nx_N^*-\frac{\lambda(1-t)q}{2}x_N^2\,.
\end{multline}
For the $s$-derivative of the corresponding quenched measure $\nu_{t,s}$ we can use \cite[Lemma 1.4.2]{Tala_vol1}:
\begin{multline}
    \frac{d}{ds}\nu_{t,s}(f)=\frac{\lambda}{2}\sum_{1\leq l,l'\leq n}\nu_{t,s}\left(f\epsilon^{(l)}\epsilon^{(l')}\left(q_{ll'}^--\bar{q}\right)\right)-n\lambda\sum_{l=1}^n\nu_{t,s}\left(f\epsilon^{(l)}\epsilon^{(n+1)}\left(q_{l,n+1}^--\bar{q}\right)\right)\\
    -n\frac{\lambda}{2}\nu_{t,s}\left(
    f \epsilon^{(n+1)\,2}\left(q_{n+1,n+1}^--\bar{q}\right)
    \right)+\frac{\lambda n (n+1)}{2}\nu_{t,s}\left(f \epsilon^{(n+1)}\epsilon^{(n+2)}\left(q_{n+1,n+2}^--\bar{q}\right)\right)\,.
\end{multline}
The $t$-derivative instead requires no integration by parts and can be computed directly:
\begin{multline}
    \frac{d}{dt}\nu_{t,s}(f)=\lambda\sum_{l=1}^n\nu_{t,s}\left(f \epsilon^{(l)}\epsilon^*(q_{l*}^--\bar{q})\right)-\lambda n\nu_{t,s}\left(f \epsilon^{(n+1)} \epsilon^* (q_{n+1,*}^--\bar{q})\right)\\
    -\frac{\lambda}{2}\sum_{l=1}^n\nu_{t,s}\left(f\epsilon^{(l)2}(q_{ll}^--\bar{q})
    \right)+n\frac{\lambda}{2}\nu_{t,s}\left(
    f \epsilon^{(n+1)\,2}\left(q_{n+1,n+1}^--\bar{q}\right)
    \right)\,.
\end{multline}
Adding up the two previous contributions and setting $s=t$ we get the statement.
\end{proof}
From the previous lemma we can get a (still raw) control on the expectation of a non-negative function:

\begin{lemma}
For any bounded non-negative function $f$ of $n$ signal replicas and the ground truth we have
\begin{align}\label{raw_estimate}
    \nu_t(f)\leq K(\lambda,n)\nu_{1}(f)\,,
\end{align}
where $K(\lambda,n)>0$ is a constant depending only on the signal-to-noise ratio $\lambda$ and the number of replicas $n$.
\end{lemma}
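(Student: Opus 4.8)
The plan is to convert the exact differentiation formula of Lemma~\ref{lemma_dv} into a Gr\"onwall-type differential inequality for $t\mapsto\nu_t(f)$ and to integrate it backward from $t=1$. First I would isolate the only structural fact that is needed, namely that $P_X$ has bounded support. Writing $\mathrm{supp}\,P_X\subseteq[-S,S]$, every cavity variable obeys $|\epsilon^{(l)}|\le S$ (for $l=\ast,1,\dots,n+2$), every truncated overlap obeys $|q^-_{\ell\ell'}|\le S^2$, and $0\le\bar q(\lambda)\le S^2$; hence each of the ``insertions'' multiplying $f$ on the right-hand side of \eqref{derivata_nu}, i.e.\ the quantities $(q^-_{\ell\ell'}-\bar q)\,\epsilon^{(l)}\epsilon^{(l')}$, is a bounded function of the replicas with sup-norm at most some $M=M(\lambda)$, \emph{uniformly in $t\in[0,1]$ and in $N$}.

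Next I use that $f\ge 0$ and that $\nu_t$ is a positive linear functional on bounded observables, so that $|\nu_t(gf)|\le\|g\|_\infty\,\nu_t(f)$ for every bounded $g$. Applying this bound term by term in \eqref{derivata_nu} and counting the $O(n^2)$ summands gives
\[
\Big|\frac{\d}{\d t}\,\nu_t(f)\Big|\ \le\ C(\lambda,n)\,\nu_t(f)\,,\qquad t\in[0,1]\,,
\]
with $C(\lambda,n)=\lambda M(\lambda)\big(\tfrac{n(n-1)}{2}+n^2+n+n+\tfrac{n(n+1)}{2}\big)$ (the precise value is immaterial).

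Finally, fix $\varepsilon>0$ and put $\phi_\varepsilon(t):=\nu_t(f)+\varepsilon>0$. The inequality above yields $\phi_\varepsilon'(t)\ge -C(\lambda,n)\,\phi_\varepsilon(t)$, hence $\frac{\d}{\d t}\log\phi_\varepsilon(t)\ge -C(\lambda,n)$; integrating on $[t,1]$ and exponentiating,
\[
\nu_t(f)+\varepsilon\ \le\ e^{C(\lambda,n)(1-t)}\big(\nu_1(f)+\varepsilon\big)\ \le\ e^{C(\lambda,n)}\big(\nu_1(f)+\varepsilon\big)\,.
\]
Letting $\varepsilon\downarrow 0$ gives \eqref{raw_estimate} with $K(\lambda,n)=e^{C(\lambda,n)}$.

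The argument is essentially routine, so there is no single hard step; the two points deserving attention are that the insertions in \eqref{derivata_nu} must be bounded \emph{uniformly in $t$} — which is precisely where boundedness of $P_X$ is used, and would be the obstruction for unbounded priors — and that the inequality is integrated from $t=1$ downward, so that it is $\nu_1(f)$, not $\nu_0(f)$, that appears on the right. One also tacitly uses that $t\mapsto\nu_t(f)$ is $C^1$, which is guaranteed by the formula of Lemma~\ref{lemma_dv} (a finite sum of $\nu_t$ of bounded observables).
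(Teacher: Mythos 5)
Your proof is correct and follows essentially the same route as the paper's: bound the insertions in \eqref{derivata_nu} using the bounded support of $P_X$ and the positivity of $f$ to get $|\nu_t'(f)|\leq C(\lambda,n)\,\nu_t(f)$, then apply Gr\"onwall's inequality backward from $t=1$. Your version merely spells out the details (the uniform bound on $(q^-_{\ell\ell'}-\bar q)\epsilon^{(l)}\epsilon^{(l')}$ and the $\varepsilon$-regularization) that the paper leaves implicit.
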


\begin{proof}
Recall that we consider bounded support priors. Therefore using \eqref{derivata_nu} and the triangular inequality it is easy to see that $|\nu'_t(f)|\leq C(\lambda,n)\nu_t(f)$ for some constant $C(\lambda,n)$ depending only on the signal-to-noise ratio $\lambda$ and the number of replicas $n$, and then
\begin{align}
    \nu'_t(f)\geq -C(\lambda,n)\nu_t(f)\,.
\end{align}
The rest follows from an application of Gronwall's lemma.
\end{proof}

Now we come to the central technical result that allows us to carry out the entire cavity computation.
\begin{lemma}
For any bounded function $f$ of $n$ replicas and the ground truth we have
\begin{align}\label{main_estimate1}
    &\left|\nu_t(f)-\nu_0(f)\right|\leq K(\lambda,n)\nu^{1/\tau_1}\left(|q_{1\ast}^--\bar{q}|^{\tau_1}\right)\nu^{1/\tau_2}\left(|f|^{\tau_2}\right)\\
    \label{main_estimate2}
    &\left|\nu_t(f)-\nu_0(f)-\nu'_{0}(f)\right|\leq K(\lambda,n)\nu^{1/\tau_1}\left(|q_{1\ast}^--\bar{q}|^{2\tau_1}\right)\nu^{1/\tau_2}\left(|f|^{\tau_2}\right)\,,
\end{align}
for any non-negative $\tau_1,\tau_2$ such that $\frac{1}{\tau_1}+\frac{1}{\tau_2}=1$. 
\end{lemma}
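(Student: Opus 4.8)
The plan is to obtain both estimates from a Taylor expansion of $t\mapsto\nu_t(f)$ around $t=0$ — of first order for \eqref{main_estimate1}, of second order for \eqref{main_estimate2} — and then to bound the derivatives $\nu'_s(f)$ and $\nu''_s(f)$ uniformly in $s\in[0,1]$ using Lemma \ref{lemma_dv}, the raw comparison \eqref{raw_estimate}, Hölder's inequality, and the exchangeability of the overlap array under the Nishimori-line measure $\nu:=\nu_1$.

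For \eqref{main_estimate1} I would write $|\nu_t(f)-\nu_0(f)|\le\int_0^t|\nu'_s(f)|\,\d s\le\sup_{s\in[0,1]}|\nu'_s(f)|$. By Lemma \ref{lemma_dv}, $\nu'_s(f)$ is a finite linear combination, with coefficients depending only on $\lambda$ and $n$, of terms $\nu_s\big((q^-_{\ell\ell'}-\bar q)\,\epsilon^{(\ell)}\epsilon^{(\ell')}f\big)$ over \emph{distinct} pairs $\ell\ne\ell'$. Since $P_X$ has bounded support there is $S<\infty$ with $|\epsilon^{(\ell)}|\le S$ almost surely, so each term is at most $S^2\,\nu_s\big(|q^-_{\ell\ell'}-\bar q|\,|f|\big)$; applying \eqref{raw_estimate} to the non‑negative function $|q^-_{\ell\ell'}-\bar q|\,|f|$ replaces $\nu_s$ by $K(\lambda,n)\,\nu$, and Hölder with exponents $\tau_1,\tau_2$ gives $\nu\big(|q^-_{\ell\ell'}-\bar q|\,|f|\big)\le\nu^{1/\tau_1}\big(|q^-_{\ell\ell'}-\bar q|^{\tau_1}\big)\,\nu^{1/\tau_2}\big(|f|^{\tau_2}\big)$. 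Finally, at $t=1$ the cavity Hamiltonian \eqref{eq:cavity_Hamiltonian} reduces to \eqref{eq:Hamiltonian}, so the joint law of $(\mathbf{x}^*,\mathbf{x}^{(1)},\dots)$ under $\nu$ is invariant under permutations of the replica labels — permutation symmetry of the genuine replicas together with the Nishimori identity \eqref{eq:NishimoriID} for the label $0\equiv*$. Since $q^-_{\ell\ell'}$ is built from these variables, $\nu\big(|q^-_{\ell\ell'}-\bar q|^{\tau_1}\big)=\nu\big(|q^-_{1*}-\bar q|^{\tau_1}\big)$ for every pair $\ell\ne\ell'$, and absorbing all constants into $K(\lambda,n)$ yields \eqref{main_estimate1}.

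For \eqref{main_estimate2} I would use $\nu_t(f)-\nu_0(f)-t\,\nu'_0(f)=\int_0^t(t-s)\,\nu''_s(f)\,\d s$, which at $t=1$ is exactly the left‑hand side of \eqref{main_estimate2} (for general $t\in[0,1]$ the same bound holds with $\nu'_0(f)$ replaced by $t\,\nu'_0(f)$). To compute $\nu''_s(f)$ I differentiate each summand $\nu_s(g)$ of $\nu'_s(f)$, where $g=(q^-_{\ell\ell'}-\bar q)\epsilon^{(\ell)}\epsilon^{(\ell')}f$ is a bounded function of finitely many replicas and of the ground truth; Lemma \ref{lemma_dv}, now applied with the replica count enlarged to accommodate $g$, expresses $\tfrac{\d}{\d s}\nu_s(g)$ again as a linear combination of terms $\nu_s\big((q^-_{\mu\mu'}-\bar q)\epsilon^{(\mu)}\epsilon^{(\mu')}g\big)$ over distinct pairs — note that the self‑overlap contributions already cancelled in \eqref{derivata_nu}, so no uncontrolled powers of $q^-_{\mu\mu}$ appear. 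Hence $\nu''_s(f)$ is, up to $(\lambda,n)$‑constants, a finite sum of terms $\nu_s\big((q^-_{\mu\mu'}-\bar q)(q^-_{\ell\ell'}-\bar q)\,\epsilon^{(\mu)}\epsilon^{(\mu')}\epsilon^{(\ell)}\epsilon^{(\ell')}f\big)$. Bounding the four $\epsilon$'s by $S$, using $|ab|\le\tfrac12(a^2+b^2)$ on the two overlap deviations, applying \eqref{raw_estimate}, then exchangeability (which also covers the auxiliary replicas $n+1,\dots,n+4$ and identifies every $\nu\big((q^-_{\mu\mu'}-\bar q)^2|f|\big)$ with $\nu\big((q^-_{1*}-\bar q)^2|f|\big)$), and finally Hölder with exponents $\tau_1,\tau_2$ on the product $(q^-_{1*}-\bar q)^2\cdot|f|$, gives $\sup_{s\in[0,1]}|\nu''_s(f)|\le K(\lambda,n)\,\nu^{1/\tau_1}\big(|q^-_{1*}-\bar q|^{2\tau_1}\big)\,\nu^{1/\tau_2}\big(|f|^{\tau_2}\big)$; integrating the remainder (with $\int_0^t(t-s)\,\d s\le\tfrac12$) yields \eqref{main_estimate2}.

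The Gaussian‑integration‑by‑parts bookkeeping is already packaged in Lemma \ref{lemma_dv}, and the Hölder/constant chasing is routine. The step requiring care — and the main obstacle — is the second differentiation: one must verify that Lemma \ref{lemma_dv} legitimately applies to the intermediate observable $g$ (which itself carries an overlap factor and depends on the ground truth), that only distinct‑pair overlaps survive so that no uncontrolled self‑overlap powers enter $\nu''_s(f)$, and that the exchangeability argument under $\nu$ still identifies \emph{all} overlap deviations appearing after the extra replicas have been introduced with a single power of $|q^-_{1*}-\bar q|$ — which is precisely what allows the right‑hand sides of \eqref{main_estimate1}–\eqref{main_estimate2} to be expressed in terms of $q^-_{1*}$ alone.
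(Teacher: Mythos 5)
Your proof follows essentially the same route as the paper: a Taylor expansion in $t$ (Lagrange form in the paper, integral form here, and you are in fact more careful about the factor $t$ multiplying $\nu'_0(f)$), control of $\nu'_s$ and $\nu''_s$ by applying Lemma \ref{lemma_dv} once resp.\ twice, boundedness of the spins, \eqref{raw_estimate}, H\"older, and exchangeability/Nishimori to reduce every overlap deviation to $q_{1\ast}^-$; where the paper uses Cauchy--Schwarz on the product of the two overlap deviations you use $|ab|\le\tfrac12(a^2+b^2)$, which is equivalent. The only slip is the intermediate claim that exchangeability identifies $\nu\big((q^-_{\mu\mu'}-\bar q)^2|f|\big)$ with $\nu\big((q^-_{1\ast}-\bar q)^2|f|\big)$: the permutation carrying $(\mu,\mu')$ to $(1,\ast)$ also relabels the arguments of $f$, so this is not an identity when $\mu$ or $\mu'$ is a replica on which $f$ depends. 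The slip is harmless and easily repaired — either apply H\"older first and then use exchangeability on the overlap-only factor (as the paper does), or observe that $\nu(|f\circ\pi|^{\tau_2})=\nu(|f|^{\tau_2})$ so the final bound is unchanged.
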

\begin{remark}
We stress that in \eqref{main_estimate2} $\tau_1$ is multiplied by a factor $2$ at the exponent of $|q_{1\ast}^--\bar{q}|$, which proves to be crucial to obtain the Central Limit Theorem from concentration of the overlap, later displayed in \eqref{eq:concentration_elalaoui}, at least with the approach used in this work.
\end{remark}
\begin{proof}
Thanks to Lagrange's mean value theorem
\begin{align}
    \left|\nu_t(f)-\nu_0(f)\right|\leq\sup_{t\in[0,1]}|\nu'_t(f)|\,.
\end{align}
Using \eqref{derivata_nu}, the fact that $P_X$ has bounded support and the triangular inequality one can bound the first derivative of $\nu_t$ with an expression containing terms of the type
\begin{align}
    \nu_t\left(|f|\left|q_{ll'}^--\bar{q}\right|\right)
\end{align}
that by H\"older are further bounded by
\begin{align}
    \nu_t^{1/\tau_1}\left(|q_{ll'}^--\bar{q}|^{\tau_1}\right)\nu_t^{1/\tau_2}\left(|f|^{\tau_2}\right)\,,
\end{align}
with $\tau_1$ and $\tau_2$ as in the statement. After an application of the Nishimori identities and \eqref{raw_estimate} any term used to bound the derivative has the same form. Therefore inequality \eqref{main_estimate1} is proved.

Concerning \eqref{main_estimate2}, notice that
\begin{align}
    \left|\nu_t(f)-\nu_0(f)-\nu'_{0}(f)\right|\leq\sup_{t\in[0,1]}|\nu_t''(f)|\,.
\end{align}
So now the goal is to bound the second derivative of $\nu_t$. This can be done applying \eqref{derivata_nu} twice. Using again the triangular inequality one can bound it with an expression containing terms like
\begin{align}
    \nu_t\left(
    |f|\left|q_{ll'}^--\bar{q}\right|\left|q_{rr'}^--\bar{q}\right|
    \right)\leq\nu_t^{1/\tau_1}\left(\left|q_{ll'}^--\bar{q}\right|^{\tau_1}\left|q_{rr'}^--\bar{q}\right|^{\tau_1}\right)\nu_t^{1/\tau_2}\left(|f|^{\tau_2}\right)\,.
\end{align}
To obtain the final bound one can use Cauchy-Schwartz's inequality on the first expectation on the r.h.s. The result then follows again by the Nishimori identities and \eqref{raw_estimate}.
\end{proof}
\subsection{Overlap concentration}
We borrow a concentration result from \cite[Theorem 7]{ElAlaoui} for the 
overlap $q_{1*}$ with respect to the measure $\nu_t$ induced by the cavity Hamiltonian \eqref{eq:cavity_Hamiltonian}:
\begin{theorem}\label{thm:concalaoui}
Suppose $p$ is twice differential at a given $\lambda$. Then for all $t\in[0,1]$ there exist two constants $K(\lambda)\geq 0$ and $c(t)$ such that
\begin{align}
    \label{eq:concentration_elalaoui}
    \nu_t\Big((q_{1\ast}-\bar{q}(\lambda))^4\Big)\leq K(\lambda)\Big(\frac{1}{N^2}+e^{-c(t)N}\Big)\,,
\end{align}
with $c(t)>0$ on $[0,1)$. Moreover if either $\lambda<\lambda_c$ or $P_X$ is not symmetric about the origin,
then $c(t)>0$ on $[0,1]$.
\end{theorem}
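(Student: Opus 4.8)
The statement is quoted from \cite[Theorem 7]{ElAlaoui}, so the first move is to verify that its proof applies to the cavity measure $\nu_t$: for every fixed $t$ the Hamiltonian \eqref{eq:cavity_Hamiltonian} is the log-posterior of a spiked Wigner model carrying one extra rank-one ``side'' observation on the last coordinate, and the structural ingredients that proof uses --- boundedness of $P_X$, convexity and monotonicity of the quenched pressure in the signal-to-noise parameters, and the Nishimori identities for $\nu_t$ --- are all available here. If one wanted to reconstruct the argument rather than merely cite it, the plan would be the following.

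First I would split the deviation into a thermal and a quenched part: with $\langle q_{1*}\rangle_t$ denoting the Gibbs average at frozen disorder,
\[
q_{1*}-\bar q(\lambda)=\big(q_{1*}-\langle q_{1*}\rangle_t\big)+\big(\langle q_{1*}\rangle_t-\mathbb E\langle q_{1*}\rangle_t\big)+\big(\mathbb E\langle q_{1*}\rangle_t-\bar q(\lambda)\big),
\]
and, $q_{1*}$ being bounded, estimate each of the three terms in $L^4(\nu_t)$. On the Nishimori line the Nishimori identities do most of the bookkeeping --- for instance $\mathbb E\langle q_{1*}\rangle=\mathbb E\langle q_{12}\rangle$, $\mathbb E\langle q_{1*}^2\rangle=\mathbb E\langle q_{12}^2\rangle$, with analogous relations at higher order --- tying the thermal and quenched fluctuations of $q_{1*}$ to one another, so that all three contributions reduce to controlling the fluctuation of the overlap under the annealed measure.

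That annealed fluctuation I would control by convexity in the signal-to-noise ratio, working inside the augmented family \eqref{eq:Hamiltonian_complete}: $\bar p_N(\lambda,h)$ is convex and nondecreasing in $(\lambda,h)$, its first derivatives are, up to explicit constants, annealed overlaps, and its second derivatives control $N$ times the overlap variances (the Nishimori identities turning the dangerous cross-terms into harmless nonnegative contributions). Integrating a second derivative over a short window in $h$, together with $\bar p_N'\to p'$ and a Griffiths-type bound, gives an $O(1/N)$ variance estimate at Lebesgue-almost every $h$, hence --- after selecting a good $h=h_N\to0$ --- at $h=0$; twice-differentiability of $p$ at $\lambda$ is used precisely here, to make the maximiser $\bar q(\lambda,h)$ unique and continuous at $h=0$ so that this limiting procedure lands on $\bar q(\lambda)$. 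Promoting these $O(1/N)$ second-moment bounds to the announced $O(1/N^2)$ fourth-moment bound is the usual second-order refinement: the thermal fourth moment of $q_{1*}-\langle q_{1*}\rangle_t$ is dominated by the square of the thermal second moment up to lower-order terms, and the Nishimori identities carry this over to the full deviation. The parameter $t$ enters only through the strength $\sqrt{\lambda(1-t)\bar q(\lambda)}$ of the side channel in \eqref{eq:cavity_Hamiltonian}, which acts as a symmetry-breaking field and degenerates as $t\to1$; this is the source of the extra $e^{-c(t)N}$ with $c(t)\downarrow0$. When $\lambda<\lambda_c$ (so $\bar q(\lambda)=0$, nothing needs breaking, the overlap already concentrating at $0$) or when $P_X$ is not symmetric about the origin (no residual $\mathbb Z_2$ degeneracy), one can keep $c(t)$ bounded below on all of $[0,1]$.

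The delicate point --- and the reason the statement is phrased with an exponential correction --- is this last step: obtaining a genuine second-order concentration (rate $1/N^2$, not merely a law of large numbers) and tracking uniformly in $t$ how close to $1$ it survives. Outside the two favourable regimes the overlap really does fail to concentrate at $t=1$, since replica $1$ and the planted configuration may be globally aligned or anti-aligned with comparable Gibbs weight, so the $e^{-c(t)N}$ term cannot be dispensed with there.
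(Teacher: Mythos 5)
The paper offers no proof of this statement: it is imported verbatim as a citation of \cite[Theorem 7]{ElAlaoui}, which is exactly the first move of your proposal. Your reconstruction of the cited argument (thermal/quenched decomposition, Nishimori identities, convexity of $\bar p_N(\lambda,h)$ and second-derivative control of the overlap variance, with the $(1-t)$-dependent side channel producing the $e^{-c(t)N}$ term) is consistent with the standard proof and with how the paper uses the result, so there is nothing to compare beyond the citation itself.
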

We note that by Jensen's inequality the previous gives also an analogous control on the quadratic fluctuations of $q_{1\ast}$ around $\bar{q}(\lambda)$:
\begin{align}\label{eq:concentration_quadratic}
    \nu_t\Big((q_{1\ast}-\bar{q}(\lambda))^2\Big)\leq K(\lambda)\Big(\frac{1}{N}+e^{-c(t)N}\Big)\,.
\end{align}
for some $K$ and $c$ with the same features as above.\\

\subsection{Lemmata}

We have already seen that the cavity method leads to expected values of observable containing the  quantity $(q^-_{ll}-\bar{q}(\lambda))$ where $q^-_{ll}$ is defined in \eqref{eq:defovermen}. We also introduce the rescaled version of them:
\begin{equation}\label{rarraymeno}
 \xi_{ll'}^-=\sqrt{N}(q^-_{ll'}-\bar{q}(\lambda))\,,\quad l,l'\in\{*,1,\ldots\,n\}, \boldsymbol{\xi}^-=(\xi^-_{ab})_{1\leq a<b\leq n}
\end{equation}
The expectations simplified via the cavity method will present one missing spin ($t=0$ in the interpolation). Hence, in order to close some consistency equations, we will need to add it back and control the error, which is precisely the purpose of the following Lemma.
\begin{lemma}\label{replacement_lemma}
Given $\mathbf{u}\in\R^{n(n-1)/2}$   and  $l,l'\in\{*,1,\ldots\,n+2\}$ consider the  quantity
\begin{equation}
\nu\left(\xi_{ll'}
e^{i\mathbf{u}^\intercal\boldsymbol{\xi}}\right)
\end{equation}
where $\nu$ is the quenched measure over $n+2$ replicas. Then
\begin{equation}\label{eq:estcavi}
\big|\nu\left(\xi_{ll'}
e^{i\mathbf{u}^\intercal\boldsymbol{\xi}}\right)-\nu_0\left(\xi^-_{ll'}
e^{i\mathbf{u}^\intercal\boldsymbol{\xi}^-}\right) \big|\leq \dfrac{C(1+|\mathbf{u}|)}{\sqrt{N}} 
\end{equation}
where $C$ is independent from $N$ and $u$.
\end{lemma}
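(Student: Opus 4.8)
The plan is to decompose the error in \eqref{eq:estcavi} into two pieces: first replace the full-system quantity $\nu(\xi_{ll'}e^{i\mathbf{u}^\intercal\boldsymbol{\xi}})$ by its $t=0$ (cavity) counterpart $\nu_0(\xi_{ll'}e^{i\mathbf{u}^\intercal\boldsymbol{\xi}})$, and then replace the overlaps $\xi$ by their one-spin-removed versions $\xi^-$ inside the cavity measure. For the first step I would use the interpolation estimate \eqref{main_estimate1} with $f=\xi_{ll'}e^{i\mathbf{u}^\intercal\boldsymbol{\xi}}$ (taking real and imaginary parts separately to stay within the scope of the lemma, or noting the estimate extends to complex $f$ by the triangle inequality). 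With the choice $\tau_1=\tau_2=2$, the right-hand side becomes $K(\lambda,n)\,\nu^{1/2}(|q_{1*}^--\bar q|^{2})\,\nu^{1/2}(|f|^{2})$; since $|e^{i\mathbf{u}^\intercal\boldsymbol{\xi}}|=1$, we have $|f|^2=\xi_{ll'}^2=N(q_{ll'}-\bar q)^2$, so $\nu(|f|^2)\le K(\lambda)N\cdot(1/N+e^{-cN})=O(1)$ by the Nishimori identities and the quadratic concentration bound \eqref{eq:concentration_quadratic} (which also controls $\nu((q_{1*}^--\bar q)^2)$ since $q_{1*}^-$ differs from $q_{1*}$ by $O(1/N)$). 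The factor $\nu^{1/2}(|q_{1*}^--\bar q|^{2})$ contributes $O(1/\sqrt N)$, giving a bound of order $1/\sqrt N$ for this step, uniformly in $\mathbf{u}$.

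The second step is to estimate $|\nu_0(\xi_{ll'}e^{i\mathbf{u}^\intercal\boldsymbol{\xi}})-\nu_0(\xi_{ll'}^-e^{i\mathbf{u}^\intercal\boldsymbol{\xi}^-})|$. Here I would write the difference as $\nu_0((\xi_{ll'}-\xi_{ll'}^-)e^{i\mathbf{u}^\intercal\boldsymbol{\xi}}) + \nu_0(\xi_{ll'}^-(e^{i\mathbf{u}^\intercal\boldsymbol{\xi}}-e^{i\mathbf{u}^\intercal\boldsymbol{\xi}^-}))$. For the first term, $\xi_{ll'}-\xi_{ll'}^- = \sqrt N(q_{ll'}-q_{ll'}^-) = \epsilon^{(l)}\epsilon^{(l')}/\sqrt N$, which is bounded by $C/\sqrt N$ deterministically because $P_X$ has bounded support; hence this term is $O(1/\sqrt N)$ uniformly in $\mathbf{u}$. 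For the second term, use $|e^{ia}-e^{ib}|\le|a-b|$ to get $|e^{i\mathbf{u}^\intercal\boldsymbol{\xi}}-e^{i\mathbf{u}^\intercal\boldsymbol{\xi}^-}|\le |\mathbf{u}^\intercal(\boldsymbol{\xi}-\boldsymbol{\xi}^-)|\le |\mathbf{u}|\cdot|\boldsymbol{\xi}-\boldsymbol{\xi}^-|$, and each component $\xi_{ab}-\xi_{ab}^-=\epsilon^{(a)}\epsilon^{(b)}/\sqrt N$ is $O(1/\sqrt N)$ by bounded support, so $|\boldsymbol{\xi}-\boldsymbol{\xi}^-|\le C/\sqrt N$; combined with $|\xi_{ll'}^-|\le \sqrt N|q_{ll'}^--\bar q|$ and $\nu_0(|q_{ll'}^--\bar q|)=O(1/\sqrt N)$ (Cauchy–Schwarz plus \eqref{eq:concentration_quadratic} and Nishimori), this term is bounded by $C|\mathbf{u}|/\sqrt N$. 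Adding the two steps yields the claimed bound $C(1+|\mathbf{u}|)/\sqrt N$.

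The main obstacle is bookkeeping rather than depth: one must make sure that every overlap appearing — whether indexed by $*$, by a replica $\le n$, or by the extra replicas $n+1,n+2$ coming from the derivative formula \eqref{derivata_nu} — can be reduced to $q_{1*}$ via the Nishimori identities so that the concentration bound \eqref{eq:concentration_quadratic} applies, and that the passage from $\nu_t$-estimates to $\nu$-estimates (and back) via \eqref{raw_estimate} does not introduce $N$-dependent constants. Care is also needed because the concentration constant $c(t)$ in \eqref{eq:concentration_elalaoui} may vanish as $t\to1$ in the symmetric case, but under the hypotheses of Theorem \ref{th:main} ($\lambda<\lambda_c$ or $P_X$ not symmetric) we have $c(t)>0$ on all of $[0,1]$, so the exponential terms are uniformly negligible; alternatively, since we only ever integrate against $\nu=\nu_1$ here, the relevant bound at $t=1$ is simply $K(\lambda)/N$ up to the same caveat. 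Finally, the complex exponential $e^{i\mathbf{u}^\intercal\boldsymbol{\xi}}$ is bounded in modulus by $1$ but is not a real test function; the cleanest fix is to observe that all the lemmas of Section \ref{sec:tools} (in particular \eqref{main_estimate1} and \eqref{raw_estimate}) extend verbatim to bounded complex-valued $f$ by splitting into real and imaginary parts, and to note this once at the start of the proof.
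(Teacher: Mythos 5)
Your proof is correct, but it is organized differently from the paper's. The paper defines a single interpolating function $g(t)=\nu_t\left(\xi_{ll'}(t)e^{i\mathbf{u}^\intercal\boldsymbol{\xi}(t)}\right)$ with $\xi_{ab}(t)=\xi_{ab}^-+t\,\epsilon^{(a)}\epsilon^{(b)}/\sqrt{N}$, so that $g(1)$ and $g(0)$ are exactly the two quantities to be compared, and bounds $|g'(t)|$ uniformly in $t$; the product rule then produces three terms corresponding one-to-one to your three error contributions (the $\frac{d\nu_t}{dt}$ term is your measure-change step, handled there directly by \eqref{derivata_nu} plus Cauchy--Schwarz and \eqref{raw_estimate} rather than by citing \eqref{main_estimate1}; the $\frac{d\xi_{ll'}(t)}{dt}$ term is your $\xi_{ll'}-\xi_{ll'}^-$ term; the derivative of the exponential is your Lipschitz estimate on $e^{i\mathbf{u}^\intercal\boldsymbol{\xi}}-e^{i\mathbf{u}^\intercal\boldsymbol{\xi}^-}$ and is, in both arguments, the sole source of the factor $|\mathbf{u}|$). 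Your two-step triangle inequality --- first $\nu\to\nu_0$ with $\boldsymbol{\xi}$ frozen, via \eqref{main_estimate1} with $\tau_1=\tau_2=2$, then $\boldsymbol{\xi}\to\boldsymbol{\xi}^-$ under $\nu_0$ --- is more modular in that it reuses the already-proved H\"older estimate instead of redoing the derivative computation, at the small cost of having to justify applying \eqref{main_estimate1} to a complex-valued test function whose sup-norm grows with $N$; this is harmless, as you note, since the constant $K(\lambda,n)$ there does not involve $\|f\|_\infty$ and only $\nu(|f|^2)=\nu(\xi_{ll'}^2)=O(1)$ enters. Both routes rest on the same concentration input \eqref{eq:concentration_quadratic} at $t=1$ (hence on $c(1)>0$ under the hypotheses of Theorem \ref{th:main}) and at $t=0$, which you correctly flag.
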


\begin{proof}
For $t\in[0,1]$ and $l,l'$ as above let us define
\begin{equation}\label{very}
\xi_{ll'}(t)=\xi_{ll'}^- +\frac{t}{\sqrt{N}}x_N^{(l)}x^{(l)}_N\,
\end{equation}
and $\boldsymbol{\xi}(t)=(\xi_{ab}(t))_{1\leq a<b'\leq n}$. Define also the function
\begin{equation}
g(t)=\nu_t\Big(\xi_{ll'}(t)e^{i\mathbf{u}^\intercal\boldsymbol{\xi}(t)} \Big)
\end{equation}
where $\nu_t$ is the quenched measure on $n+2$ replicas  induced by the cavity Hamiltonian \eqref{eq:cavity_Hamiltonian}.
Since $g(1)$  and $g(0)$ are respectively  equal to the two terms in the l.h.s. of \eqref{eq:estcavi},   if we show that $\Big|\dfrac{dg}{dt}\Big| \leq C\dfrac{1+|\mathbf{u}|}{\sqrt{N}} $ uniformly in $t$ then \eqref{eq:estcavi} is proved. A direct computation gives
\begin{equation}\label{eq:esti}
\frac{dg}{dt}=\frac{d\nu_t}{dt}\Big(\xi_{ll'}(t)e^{i\mathbf{u}^\intercal\boldsymbol{\xi}(t)}\Big)+\nu_t\Big(\frac{d\xi_{ll'}(t)}{dt}e^{i\mathbf{u}^\intercal\boldsymbol{\xi}(t)}\Big)+i\nu_t\Big(\xi_{ll'}(t)\mathbf{u}^{\intercal} \dfrac{d
\boldsymbol{\xi}(t)}{dt}e^{i\mathbf{u}^\intercal\boldsymbol{\xi}(t)}\Big)\,.
\end{equation}
Concerning the last term, by the very definition \eqref{very} of $\xi_{ab}(t)$, the fact that all the $x$'s are bounded random variables and by \eqref{eq:concentration_quadratic} we have that
\begin{align}
    \Big|\nu_t\Big(\xi_{ll'}(t)\mathbf{u}^{\intercal} \dfrac{d
\boldsymbol{\xi}(t)}{dt}e^{i\mathbf{u}^\intercal\boldsymbol{\xi}(t)}\Big)\Big|\leq \dfrac{C|\mathbf{u}|}{\sqrt{N}}\,.
\end{align}
The second term in the above equation is bounded in norm simply by $C/\sqrt{N}$, again by boundedness of the $x$'s and of the imaginary exponential. The first term  in \eqref{eq:esti} can be computed using  \eqref{derivata_nu} and using Cauchy-Schwartz inequality that leads to a bound by a sum of terms proportional to the quantities 
\begin{equation}
\Big[\nu_t\left((\xi_{ll'}(t))^2\right) \nu_t\left((q_{ll'}^--\bar{q})^2\right)\Big]^{1/2} .
\end{equation}
Let us recall that by \eqref{eq:concentration_quadratic} 
$\nu_1\left((\xi_{l,l'})^2\right)< \infty$,
therefore using \eqref{raw_estimate} one also has  $ \nu_t\left((\xi_{l,l'})^
2\right)< \infty$
and then $ \nu_t\left((q_{ll'}^--\bar{q})^2\right)=O\left(N^{-1}\right)$. This implies that also  the first term  in \eqref{eq:esti} is $O(N^{-1/2})$ and this concludes the proof.
\end{proof}

In the proof of Theorem \ref{th:main} we will show that the  characteristic function of  the random vector $\boldsymbol{\xi}$  defined in \eqref{rarray}  satisfies the same  differential equation of the characteristic function of a Gaussian vector up to a remainder that vanishes as $N$ goes to infinity. The next lemma gives sufficient conditions in order to conclude that if so, then the large $N$ limit of the characteristic function of $\boldsymbol{\xi}$ coincides with the characteristic function  of a Gaussian vector.

\begin{lemma}\label{lem:limit_ODE}
Let $(f_N)$ be a sequence of absolutely continuous, complex valued functions $f_N:\mathbb{R}\to \mathbb{C} $  that satisfy
\begin{equation}\label{ode}
\begin{cases}
f'_N(s)= - a\, s \,f_N(s)+r_N(s) ,\\
f_N(0)=1
\end{cases}
\end{equation}
where $a\in \R$ is a fixed parameter and  $r_N$ is  such that  $|r_N(s)|\leq g(s)N^{-\alpha}$ for some $\alpha>0$ with $g$  continuous, then 
\begin{equation}
\lim_{N\to\infty} f_N(s)=e^{-\frac{1}{2} a  s^2 }\,.
\end{equation}
\end{lemma}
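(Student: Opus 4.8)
The plan is to solve the linear ODE \eqref{ode} explicitly via the integrating factor $e^{as^2/2}$ and then show that the contribution of the remainder $r_N$ to the solution vanishes as $N\to\infty$. Writing $\mu(s)=e^{as^2/2}$, multiplying \eqref{ode} by $\mu$ and recognizing the left side as $(\mu f_N)'$, we get $(\mu(s)f_N(s))'=\mu(s)r_N(s)$. Integrating from $0$ to $s$ and using $f_N(0)=1$, $\mu(0)=1$ yields
\begin{equation}
f_N(s)=e^{-\frac{1}{2}as^2}+e^{-\frac{1}{2}as^2}\int_0^s e^{\frac{1}{2}au^2}\,r_N(u)\,\d u\,.
\end{equation}

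Next I would estimate the error term. Fix $s\in\mathbb{R}$. Using the bound $|r_N(u)|\leq g(u)N^{-\alpha}$ with $g$ continuous, hence bounded on the compact interval between $0$ and $s$, and the fact that $e^{au^2/2}$ is likewise bounded there, we obtain
\begin{equation}
\Big|e^{-\frac{1}{2}as^2}\int_0^s e^{\frac{1}{2}au^2}\,r_N(u)\,\d u\Big|\leq e^{-\frac12 a s^2}\,|s|\,\Big(\sup_{|u|\leq|s|}e^{\frac12 a u^2}g(u)\Big)\,N^{-\alpha}\xrightarrow[N\to\infty]{}0\,,
\end{equation}
since the quantity multiplying $N^{-\alpha}$ is a finite constant depending only on $s$ (and $a$, $g$). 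Therefore $f_N(s)\to e^{-\frac12 a s^2}$ for every fixed $s$, which is the claim.

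I do not anticipate a genuine obstacle here: the statement is a soft, essentially routine fact about perturbed linear ODEs, and the only minor points to be careful about are that absolute continuity of $f_N$ is exactly what is needed to justify the fundamental theorem of calculus step $(\mu f_N)(s)-(\mu f_N)(0)=\int_0^s(\mu f_N)'$, and that the convergence is pointwise in $s$ (which is all that is required, since in the application one only needs convergence of the characteristic function at each fixed argument to identify the limiting law as Gaussian via L\'evy's continuity theorem). One could alternatively phrase the argument through Gronwall's lemma applied to $h_N(s):=f_N(s)-e^{-as^2/2}$, which satisfies $h_N'=-a\,s\,h_N+r_N$ with $h_N(0)=0$, but the explicit integrating-factor computation is cleaner and gives the quantitative rate $O(N^{-\alpha})$ directly.
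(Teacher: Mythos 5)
Your proof is correct, and it takes a mildly different route from the paper's. The paper does not solve the ODE explicitly: it splits $f_N$ into real and imaginary parts, sets $\Delta_N(s)=x_N(s)-e^{-as^2/2}$, observes that $\Delta_N$ satisfies the same ODE with zero initial datum, and then runs a Gronwall argument on $\frac{d}{ds}|\Delta_N(s)|\leq |as|\,|\Delta_N(s)|+N^{-\alpha}g(s)$ to get $|\Delta_N(s)|\leq N^{-\alpha}e^{|a|s^2/2}\int_0^s g(t)\,dt\to 0$. Your variation-of-constants computation is the explicit counterpart of that estimate: it gives the exact Duhamel representation
\begin{equation*}
f_N(s)=e^{-\frac{1}{2}as^2}\Big(1+\int_0^s e^{\frac{1}{2}au^2}r_N(u)\,du\Big)\,,
\end{equation*}
from which the same $O(N^{-\alpha})$ rate falls out, with the added advantages that no real/imaginary decomposition is needed (the integrating factor works verbatim for complex-valued $f_N$) and that the sign of $a$ is handled transparently. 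The only points requiring care are the ones you already flagged: absolute continuity of $f_N$ (hence of $\mu f_N$) is what licenses the fundamental-theorem-of-calculus step, with the ODE holding almost everywhere, and pointwise convergence in $s$ is all that L\'evy's continuity theorem requires downstream. Both arguments are complete; yours is slightly more quantitative, the paper's is the more standard Gronwall phrasing (which you also note as an alternative).
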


\begin{proof}
Let us write $f_N=Re(f_N)+i Im(f_N)=:x_N+iy_N$, then the Cauchy problem  \eqref{ode} is equivalent to the system
\begin{equation}\label{odet2}
\begin{cases}
 x'_N(s)=- a\, s\,x_N(s) + Re(r_N(s))\\
x_N(0)=1
\end{cases}
\begin{cases}
y'_N(s)=- a\, s \,y_N(s) + Im(r_N(s))\\
y_N(0)=0
\end{cases}\,.
\end{equation}

We are going to prove that  $x_N(s)\to e^{-\frac{a s^2}{2}}$. The same argument can be used to show $y_N(s)\to 0$ concluding the proof of the lemma. Consider the function
$\Delta_N(s)=x_N(s)-e^{-\frac{a s^2}{2}}$. Then one can verify that
\begin{equation}\label{odet3}
\begin{cases}
 \Delta'_N(s)=- a\, s\,\Delta_N (s) + Re(r_N(s))\\
\Delta_N(0)=0
\end{cases}\,.
\end{equation}
Therefore
\begin{equation}
\frac{d}{ds}\big|\Delta_N(s)\big|\leq\big|\frac{d}{ds}\Delta_N(s)\big|\leq |as| \,\big|\Delta_N(s)\big|+  N^{-\alpha} g(s)  
\end{equation}
and then by Gronwall's inequality (see the classical reference \cite{evans10} for instance) one gets
\begin{equation}
\big|\Delta_N(s)\big|\leq N^{-\alpha}e^{|a|\frac{s^2}{2}}\int_0^s g(t) dt
\end{equation}
that implies 
\begin{equation}
\lim_{N\to\infty} \big|\Delta_N(s)\big|=0   
\end{equation}
that is
\begin{equation}
\lim_{N\to\infty} x_N(s)=e^{-\frac{a s^2}{2}}\,.
\end{equation}

\end{proof}

As we anticipated in Section \ref{sec:Setting}, we will also need to treat the model with an external magnetic field. There is a way to interpolate in the $(\lambda,h)$ plane from the original model to one with only external fields, \emph{i.e.} one-body terms in the Hamiltonian and keeping the expected overlap constant along the trajectory. The following Theorem, that can be considered an extension of \cite[Theorem 1]{quadratic_replica_coupling} characterizes these trajectories.

\begin{theorem}\label{overlap_invarianceThm}
    Let $\lambda$ be in the full Lebesgue measure where $p(\lambda)$ is twice differentiable. Recall the definition of $F$ in \eqref{eq:consistency}. Moreover, let $h(\lambda')=h_0-\lambda' F(h_0)$ for any $\lambda'\in[0,\lambda]$ be a trajectory in the plane $(\lambda,h)\in\mathbb{R}_{>0}\times \mathbb{R}_{\geq0}$. The following statements hold:
    \begin{enumerate}
        \item[1)] $F$ is non-decreasing in $\mathbb{R}\geq0$;
        \item[2)] the supremum of $p(\lambda)$ is uniquely attained at a point $\bar{q}(\lambda)\equiv q(\lambda,0)$ s.t.
        \begin{align}
            q(\lambda,0)=F(\lambda q(\lambda,0))\,;
        \end{align}
        \item[3)] there exists a unique $h_0$ s.t. $h(\lambda)=0$ and $F(h_0)=q(\lambda,0)$, namely $h_0$ is uniquely identified by $\lambda$;
        \item[4)] for any $\lambda'\in[0,\lambda]$ the solution to the fixed point equation
        \begin{align}\label{FPintermediate}
            q(\lambda',h(\lambda'))=F(\lambda' q(\lambda',h(\lambda'))+h(\lambda'))
        \end{align}
        equals $F(h_0)$.
    \end{enumerate}
\end{theorem}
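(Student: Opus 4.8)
The plan is to prove the four assertions more or less in the order stated, since each one feeds into the next, and then to observe that assertion 4) is really the content of the theorem (overlap invariance along the trajectory $h(\lambda') = h_0 - \lambda' F(h_0)$).

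For assertion 1), the monotonicity of $F(r) = \mathbb{E}\, x^* \langle x\rangle_{r,z,x^*}$, I would use the Nishimori structure: after integrating out, $F(r)$ is the expected overlap in a scalar denoising model with signal-to-noise ratio $r$, and on the Nishimori line $F(r) = \mathbb{E}\langle x\rangle_{r}^2 \geq 0$ (the two-replica form), while the derivative in $r$ is a non-negative variance-type quantity. Concretely I would differentiate the single-site free entropy $\psi(r) := \mathbb{E}\log\int dP_X(x) e^{(\sqrt{r}z + rx^*)x - rx^2/2}$, note that $\psi'(r) = \tfrac12 F(r)$ by a Gaussian integration by parts (this is the standard I-MMSE / de Bruijn identity on the Nishimori line), and then that $\psi$ is convex in $r$ — which is exactly the statement $F' \geq 0$. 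Convexity of $\psi$ follows because $\psi$ is a limit/single-site analogue of the convex-in-$\lambda$ free entropy, or can be checked directly by a second Gaussian integration by parts producing a squared fluctuation. For assertion 2), I would simply recall the discussion already in the excerpt around \eqref{eq:consistency}: twice differentiability of $p$ at $\lambda$ forces the $\sup$ in \eqref{eq:RSformula} to be attained at a unique $\bar q(\lambda)$ solving $q = F(\lambda q)$; nothing new is needed here beyond citing that earlier paragraph.

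For assertion 3), I would argue existence and uniqueness of $h_0$ by a continuity/monotonicity sandwich. Set $h_0 := \lambda \bar q(\lambda)$ where $\bar q(\lambda) = q(\lambda,0)$ from assertion 2). Then $F(h_0) = F(\lambda \bar q(\lambda)) = \bar q(\lambda)$ by the consistency equation, so $h(\lambda) = h_0 - \lambda F(h_0) = h_0 - \lambda \bar q(\lambda) = 0$, giving existence. For uniqueness: suppose $h_0'$ also satisfies $h_0' - \lambda F(h_0') = 0$ and $F(h_0') = q(\lambda,0)$; the second condition already pins $F(h_0')$, and combined with $h_0' = \lambda F(h_0')$ we get $h_0' = \lambda q(\lambda,0) = h_0$. (If one instead wants uniqueness only from $h(\lambda)=0$, i.e.\ $h_0' = \lambda F(h_0')$, one uses that $r \mapsto r - \lambda F(r)$ can fail to be injective in general, which is precisely why the twice-differentiability hypothesis on $p$ is invoked — it is the point that needs care; see the obstacle paragraph below.)

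Assertion 4) is the heart. Fix $\lambda' \in [0,\lambda]$ and set $h(\lambda') = h_0 - \lambda' F(h_0)$; I claim $q^* := F(h_0)$ solves \eqref{FPintermediate}. Indeed $\lambda' q^* + h(\lambda') = \lambda' F(h_0) + h_0 - \lambda' F(h_0) = h_0$, hence $F(\lambda' q^* + h(\lambda')) = F(h_0) = q^*$, so $q^*$ is a fixed point — this is a one-line algebraic identity once the trajectory is written correctly. The remaining issue is that \eqref{FPintermediate} should have a \emph{unique} solution for the statement "equals $F(h_0)$" to be unambiguous; this is where I expect the main obstacle. I would handle it as in the twice-differentiable regime: the fixed-point equation $q = F(\lambda' q + h(\lambda'))$ is the stationarity condition of the strictly concave-looking (after the change of variables) potential $q \mapsto -\tfrac{\lambda' q^2}{4} + \psi(\lambda' q + h(\lambda'))$, and one argues that along this particular trajectory the optimizer stays unique — either by transporting the twice-differentiability of $p$ at $\lambda$ along the trajectory, or by directly showing the potential's derivative $-\tfrac{\lambda' q}{2} + \tfrac12 \lambda' F(\lambda' q + h)$ changes sign only once using monotonicity of $F$ from assertion 1) together with the fact that $F$ is $1$-Lipschitz-like controlled (the relevant contraction estimate, $\lambda' F'(\lambda' q + h) < 1$ near the fixed point, is what genuinely requires the non-degeneracy hypothesis). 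I would isolate this contraction/non-degeneracy claim as the technical crux and prove it by relating $\psi''$ to the overlap susceptibility, which is bounded by the overlap concentration already available on the Nishimori line.
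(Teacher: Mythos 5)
Your treatment of 1)--3) matches the paper's in substance. For 1) you pass through the single-site free entropy $\psi$ and the I-MMSE relation $\psi'(r)=\tfrac12 F(r)$ plus convexity of $\psi$; the paper records the same fact as a one-line Nishimori computation, $F'(r)=\mathbb{E}\bigl(\langle X^2\rangle-\langle X\rangle^2\bigr)^2\geq 0$, which is exactly your $\psi''$ made explicit. Point 2) is in both cases a citation of the Lelarge--Miolane uniqueness result, and your point 3) is identical to the paper's, down to the remark that without the constraint $F(h_0)=q(\lambda,0)$ the equation $h_0=\lambda F(h_0)$ admits spurious solutions such as $h_0=0$ for symmetric priors.

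The divergence is in 4). Your verification that $q^*=F(h_0)$ \emph{solves} \eqref{FPintermediate} is correct and is the paper's algebra run forwards. For the uniqueness half, however, you propose a contraction estimate $\lambda' F'(\lambda' q+h)<1$ and, as a fallback, a ``changes sign only once'' argument based on monotonicity of $F$. Neither closes: monotonicity of $F$ alone does not prevent $q\mapsto F(\lambda'q+h)-q$ from having several zeros (that is precisely the mechanism of a first-order transition), and twice differentiability of $p$ at the endpoint $\lambda$ only gives you $\lambda F'(\lambda\bar q(\lambda))=\mu_1(\lambda)<1$ \emph{at} the fixed point, not the global bound on $F'$ you would need to exclude other crossings for every $\lambda'\in[0,\lambda]$. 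The paper avoids proving uniqueness of the fixed point altogether: it takes an \emph{arbitrary} solution $q$ of \eqref{FPintermediate}, sets $h_0'=\lambda' q+h(\lambda')$, checks that $h_0'$ satisfies the same anchoring relation $h(\lambda')=h_0'-\lambda'F(h_0')$ as the $h_0$ fixed by the endpoint $(\lambda,0)$, identifies $h_0'=h_0$, and then applies $F$ to both sides of $h_0=\lambda'q+h(\lambda')$ to read off $q=F(h_0)$. So the conclusion ``any solution equals $F(h_0)$'' is reached without any susceptibility bound. To repair your write-up, either prove the global contraction you invoke (which the stated hypotheses do not obviously supply) or replace the uniqueness step by the paper's identification of $h_0'$ with $h_0$.
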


\begin{proof}
\phantom{.}

\emph{1).} This result is a correlation inequality analogous to the one obtained in \cite{contucci_morita_nishimori}. Using the Nishimori identities it is in fact possible to prove that:
\begin{align}
    \frac{dF}{dr}(r)=\mathbb{E}\Big(\langle X^2\rangle-\langle X\rangle^2\Big)^2\geq 0\,,
\end{align}
where we dropped subscripts for brevity.

\emph{2)}. See statement and proof of \cite[Theorem 1]{Lelarge2017FundamentalLO}.

\emph{3)}. $h(\lambda)=0$ means 
\begin{align*}
    h_0=\lambda F(h_0)\,.
\end{align*}
It is easy to realize that $h_0=\lambda q(\lambda,0)$ both solves the previous equation and satisfies $F(h_0)=q(\lambda,0)$. Furthermore, it is uniquely identified once $\lambda$ is chosen by point \emph{2)}. Before going to the next point we stress that, without the condition $F(h_0)=q(\lambda,0)$, $h_0$ could also take other values, like $h_0=0$ in case of a symmetric prior. However, this would not yield the overlap $q(\lambda,0)$ which is instead needed to attain the supremum of the pressure if we are above the critical threshold $\lambda_c$.

\emph{4)}. Consider a point of the straight line $h(\lambda')=h_0-\lambda' F(h_0)$ for a given $\lambda'$. Then we have another equation for $h_0$: 
\begin{align*}
    h_0=h(\lambda')+\lambda' F(h_0)
\end{align*}
which is solved by
\begin{align}
    h_0=\lambda' q(\lambda',h(\lambda'))+h(\lambda')\,.
\end{align}
Notice that we have no guarantee up to this point that the quantity $q(\lambda',h(\lambda'))$ is single valued, since its related fixed point equation may have multiple solutions. Nevertheless \emph{any} of them must yield the same $h_0$, that is fixed by the end point $(\lambda,0)$ for any $\lambda'$. Hence, if we evaluate $F$ on both sides of the previous equation, using \eqref{FPintermediate} we easily get
\begin{align}
    q(\lambda,0)=F(h_0)=F\big(\lambda' q(\lambda',h(\lambda'))+h(\lambda')\big)=q(\lambda',h(\lambda'))
\end{align}
which finally implies that the solution is constant along the trajectory and it is uniquely determined by the end point $(\lambda,0)$.
\end{proof}

\section{Proof of Theorem \ref{th:main}, the case $n=2$}\label{sec:proof2repplica}

In this section we will prove Theorem \ref{th:main} in the case $n=2$, namely a CLT theorem for the rescaled overlap $\xi_{12}$ defined in \eqref{rarray}. In this simpler case the proof does not contain many  technicalities and  shows the main ideas used also in the general case. The proof relies on Lévy's continuity theorem. Hence, the main object under study will be the characteristic function:
\begin{align}
    \phi_N(u)=\nu\Big(e^{iu\xi_{12}}\Big)\,.
\end{align}
Now, thanks to the Nishimori identities \eqref{eq:NishimoriID} we can replace one of the two replicas with the quenched variable $\mathbf{x}^*$, this simplifies computations. We will then use the definition
\begin{align}\label{def_characeristic}
    \phi_N(u)=\nu\Big(e^{iu\xi_{1\ast}}\Big)\,.
\end{align}
In order to compute the $N\to\infty$ asymptotics of it  we use the cavity method to obtain a differential equation for \eqref{def_characeristic}. 
The $u$-derivative of $\phi_N$ yields:
\begin{multline}\label{partial_phi_cavity}
    \partial_u\phi_N(u)=i\nu \left(\xi_{1\ast}e^{iu\xi_{1\ast}}\right)=i\sqrt{N}\nu \left((\epsilon\epsilon^\ast- \bar{q} )e^{iu\xi_{1\ast}^-+iu\frac{\epsilon\epsilon^\ast}{\sqrt{N}}}\right)=\\=
    i\sqrt{N}\nu \left((\epsilon\epsilon^\ast- \bar{q})e^{iu\xi_{1\ast}^-}
    \left(1+iu\frac{\epsilon\epsilon^\ast}{\sqrt{N}}\right)\right)+O\left(\frac{1}{\sqrt{N}}\right)=\\=i\sqrt{N}\nu \left((\epsilon\epsilon^\ast- \bar{q} )e^{iu\xi_{1\ast}^-}\right)-u\,\nu \left(\epsilon\epsilon^\ast(\epsilon\epsilon^\ast- \bar{q} )e^{iu\xi_{1\ast}^-}\right)+O\left(\frac{1}{\sqrt{N}}\right)\,,
\end{multline}
where $\xi_{1\ast}^-$ is defined in \eqref{rarraymeno}. We stress that in the second equality we have used the permutation invariance among the signal components, and  $O\left(\frac{1}{\sqrt{N}}\right)$ has to be intended as a quantity that can be uniformly bounded in norm by $ C_u/\sqrt{N}$ for some $C_u>0$ at fixed $u$. In the present case the dependency of $C_u$ on $u$ is at most quadratic.

Let us treat the the second term first. By \eqref{main_estimate1} with $\tau_1=1,\tau_2=\infty$ one has
\begin{align}
    \nu\left(\epsilon\epsilon^\ast(\epsilon\epsilon^\ast-\bar{q})e^{iu\xi_{1\ast}^-}\right)=\nu_0\left(\epsilon\epsilon^\ast(\epsilon\epsilon^\ast-\bar{q})\right)\nu_0\left(e^{iu\xi_{1\ast}^-}\right)+\delta
\end{align}
where $\delta=O(\nu(|q_{1\ast}^--\bar{q}|))$, that thanks to \eqref{eq:concentration_elalaoui} is $\delta=O(1/\sqrt{N})$. Recall indeed that we are in the hypothesis of asymmetric prior or $\lambda<\lambda_c$ so $c(1)>0$ and the exponentially decaying contribution is negligible w.r.t. $1/\sqrt{N}$. This remainder can be considered independent of $u$ because the imaginary exponential is always bounded by $1$ in norm.

The first term in the last line of \eqref{partial_phi_cavity} requires more attention because of the $\sqrt{N}$ factor in front of it. Therefore, we will need the finer estimate \eqref{main_estimate2} with $\tau_1=1,\tau_2=\infty$ to obtain negligible remainders:
\begin{align}\label{ssstep}
    \begin{split}
       \nu \left((\epsilon\epsilon^\ast- \bar{q} )e^{iu\xi_{1\ast}^-}\right)=\nu_0 \left((\epsilon\epsilon^\ast- \bar{q} )e^{iu\xi_{1\ast}^-}\right) +\nu_0' \left((\epsilon\epsilon^\ast- \bar{q} )e^{iu\xi_{1\ast}^-}\right)+\delta'
    \end{split}
\end{align}
where $\delta'=O(\nu((q_{1\ast}^--\bar{q})^2))=O(1/N)$ uniformly in $u$ by \eqref{eq:concentration_quadratic} for the same reasons as above. Since at $t=0$ the spin  $\epsilon$ is decoupled from anything else, the first expectation on the r.h.s. of \eqref{ssstep} can be shown to vanish, in fact
\begin{equation}
    \nu_0\left(\epsilon\epsilon^*\right)\equiv\mathbb{E}\,x^*\langle x\rangle_{\lambda\bar{q},z,x^*}
\end{equation}
where $\langle x\rangle_{r,z,x^*}$
is defined in \eqref{eq:consistency} and then  $\nu_0\left(\epsilon\epsilon^*\right)=\bar{q}$.  To evaluate $\nu'_0$ we need instead \eqref{derivata_nu} with $f=(\epsilon\epsilon^\ast- \bar{q} )e^{iu\xi_{1\ast}^-}$, and in particular $n=1$:
\begin{multline}
    \nu_0' \left((\epsilon\epsilon^\ast- \bar{q} )e^{iu\xi_{1\ast}^-}\right)=-\lambda \nu_0\left(\epsilon^{(1)}\epsilon^{(2)}(q_{12}^--\bar{q})
    (\epsilon^{(1)}\epsilon^\ast- \bar{q} )e^{iu\xi_{1\ast}^-}
    \right)\\+\lambda \nu_0\left(\epsilon^{(1)}\epsilon^{\ast}(q_{1\ast}^--\bar{q})(\epsilon^{(1)}\epsilon^\ast- \bar{q} )e^{iu\xi_{1\ast}^-}\right)-\lambda \nu_0\left(
    \epsilon^{(2)}\epsilon^{\ast}(q_{2\ast}^--\bar{q})
    (\epsilon^{(1)}\epsilon^\ast- \bar{q} )e^{iu\xi_{1\ast}^-}
    \right)\\
    +\lambda \nu_0\left(
    \epsilon^{(2)}\epsilon^{(3)}(q_{23}^--\bar{q})
    (\epsilon^{(1)}\epsilon^\ast- \bar{q} )e^{iu\xi_{1\ast}^-}
    \right)\,.
\end{multline}
We recall that at $t=0$ the spin $\epsilon$ and its replicas are decoupled from the rest.  This allows us to factorize them and to average over them separately, thus obtaining the following coefficients: 
\begin{align}
\begin{split}
    &a_1:=\mathbb{E}\langle x^2\rangle^2_{\lambda\bar{q},z,x^*}-\bar{q}^2\\    
    &a_2:=\mathbb{E}\langle x^2\rangle_{\lambda\bar{q},z,x^*}\langle x\rangle^2_{\lambda\bar{q},z,x^*}-\bar{q}^2\\
    &a_3:=\mathbb{E}\langle x\rangle_{\lambda\bar{q},z,x^*}^4-\bar{q}^2\,.
\end{split}
\end{align}
With these notations we get
\begin{multline}
    \nu_0' \left((\epsilon\epsilon^\ast-\bar{q} )e^{iu\xi_{1\ast}^-}\right)=-2\lambda a_2 \nu_0\left((q_{12}^--\bar{q})e^{iu\xi_{1\ast}^-}
    \right)+\lambda a_1\nu_0\left((q_{1\ast}^--\bar{q})e^{iu\xi_{1\ast}^-}\right)\\+\lambda 
    a_3\nu_0\left((q_{23}^--\bar{q})
    e^{iu\xi_{1\ast}^-}
    \right)=\frac{\lambda a_1}{\sqrt{N}}\nu_0\left(\xi_{1\ast}^-e^{iu\xi_{1\ast}^-}\right)-2\frac{\lambda a_2}{\sqrt{N}}\nu_0\left(\xi_{12}^-e^{iu\xi_{1\ast}^-}\right)+\frac{\lambda a_3}{\sqrt{N}}\nu_0\left(\xi_{23}^-e^{iu\xi_{1\ast}^-}\right)\,.
\end{multline}
Now we plug everything into \eqref{partial_phi_cavity}, including the remainders, obtaining:
\begin{multline}\label{partial_u_phi}
    \partial_u\phi_N(u)=-u a_1\nu_0\left(e^{iu\xi_{1\ast}^-}\right)+ i\lambda a_1\nu_0\left(\xi_{1\ast}^-e^{iu\xi_{1\ast}^-}\right)-2i \lambda a_2 \nu_0\left(\xi_{12}^-e^{iu\xi_{1\ast}^-}\right)\\+i \lambda a_3 \nu_0\left(\xi_{23}^-e^{iu\xi_{1\ast}^-}\right)+O\left(\frac{1}{\sqrt{N}}\right)\,.
\end{multline}

We see that, even if we assume to be able to replace $\nu_0\left(e^{iu\xi_{1\ast}^-}\right)$ with $\phi_N(u)$, the equation is not self-contained, because of the presence of the last two terms. This suggests that we have to introduce two other functions to be treated with the cavity approach:
\begin{align}
    \label{auxiliary_field1}
    &\psi_N(u)=i\nu\left(\xi_{2*}e^{iu\xi_{1*}}\right)\\
    \label{auxiliary_field2}
    &\zeta_N(u)=i\nu\left(\xi_{23}e^{iu\xi_{1*}}\right)\,.
\end{align}
Let us further define the quantities
\begin{align}
    \label{phi_0-}
    &\phi_0^-(u)=\nu_0\left(e^{iu\xi_{1*}^-}\right)\\
    \label{psi_0-}
    &\psi_0^-(u)=i\nu_0\left(\xi_{2*}^-e^{iu\xi_{1*}^-}\right)\\
    \label{zeta_0-}
    &\zeta_0^-(u)=i\nu_0\left(\xi_{23}^-e^{iu\xi_{1*}^-}\right)\,.
\end{align}
With these notations \eqref{partial_u_phi} rewrites as 
\begin{align}
\begin{split}
    \partial_u\phi_N(u)=-u a_1\phi_0^-(u)+ \lambda a_1\partial_u \phi_0^-(u)-2 \lambda a_2 \psi_0^-(u)+ \lambda a_3 \zeta_0^-(u)+O\left(\frac{1}{\sqrt{N}}\right)\,.
\end{split}
\end{align}

Let us now turn to $\psi_N(u)$. Again by permutation symmetry of the spins
\begin{multline}\label{psi_cavity}
    \psi_N(u)=i\sqrt{N}\nu \left((\epsilon^{(2)}\epsilon^\ast- \bar{q} )e^{iu\xi_{1\ast}^-+iu\frac{\epsilon^{(1)}\epsilon^\ast}{\sqrt{N}}}\right)=\\=
    i\sqrt{N}\nu \left((\epsilon^{(2)}\epsilon^\ast-\bar{q} )e^{iu\xi_{1\ast}^-}
    \left(1+iu\frac{\epsilon^{(1)}\epsilon^\ast}{\sqrt{N}}\right)\right)+O\left(\frac{1}{\sqrt{N}}\right)=\\=i\sqrt{N}\nu \left((\epsilon^{(2)}\epsilon^\ast-\bar{q} ) e^{iu\xi_{1\ast}^-}\right)-u\,\nu \left(\epsilon^{(1)}\epsilon^\ast
    (\epsilon^{(2)}\epsilon^\ast-\bar{q} ) e^{iu\xi_{1\ast}^-}\right)+O\left(\frac{1}{\sqrt{N}}\right)\,.
\end{multline}
Now we use \eqref{main_estimate1} for the second term:
\begin{multline}
    \nu \left(\epsilon^{(1)}\epsilon^\ast
    (\epsilon^{(2)}\epsilon^\ast-\bar{q} ) e^{iu\xi_{1\ast}^-}\right)=\nu_0\left(\epsilon^{(1)}\epsilon^\ast
    (\epsilon^{(2)}\epsilon^\ast-\bar{q} )\right)\nu_0\left(e^{iu\xi_{1\ast}^-}\right)+\gamma
    =a_2\phi_0^-(u)+\gamma\,.
\end{multline}
Here $\gamma=O(\nu(|q_{1*}^--\bar{q}|))=O(1/\sqrt{N})$. The more delicate term that multiplies $\sqrt{N}$ has to be treated with \eqref{main_estimate2} where $f=(\epsilon^{(2)}\epsilon^\ast-\bar{q} ) e^{iu\xi_{1\ast}^-}$ which is a function of $n=2$ replicas and the ground truth:
\begin{align}
    \nu\left((\epsilon^{(2)}\epsilon^\ast-\bar{q} ) e^{iu\xi_{1\ast}^-}\right)=\nu_0(\epsilon^{(2)}\epsilon^\ast-\bar{q} ) \nu_0\left(e^{iu\xi_{1\ast}^-}\right)+\nu_0'\left((\epsilon^{(2)}\epsilon^\ast-\bar{q} ) e^{iu\xi_{1\ast}^-}\right)+\gamma'\,.
\end{align}
The first term on the r.h.s. vanishes since $\nu_0(\epsilon^*\epsilon^{(2)})=\bar{q}$. $\nu_0'$ can be evaluated through \eqref{derivata_nu} as before with $n=2$, yielding:
\begin{multline}
    \nu_0'\left((\epsilon^{(2)}\epsilon^\ast-\bar{q} ) e^{iu\xi_{1\ast}^-}\right)=\lambda a_2\nu_0\left((q_{12}^--\bar{q}) e^{iu\xi_{1\ast}^-}\right)-2\lambda a_3\nu_0\left((q_{13}^--\bar{q})e^{iu\xi_{1\ast}^-}\right)\\
    -2\lambda a_2\nu_0\left((q_{23}^--\bar{q})e^{iu\xi_{1\ast}^-}\right)+\lambda a_2 \nu_0\left((q_{1*}^--\bar{q})e^{iu\xi_{1\ast}^-}\right)+\lambda a_1\nu_0\left((q_{2*}^--\bar{q})e^{iu\xi_{1\ast}^-}\right)\\
    -2\lambda a_2\nu_0\left((q_{3*}^--\bar{q})e^{iu\xi_{1\ast}^-}\right)+3\lambda a_3\nu_0\left((q_{34}^--\bar{q})e^{iu\xi_{1\ast}^-}\right)\,.
\end{multline}
Using the definitions of $\phi_0^-$, $\psi_0^-$ and $\zeta_0^-$, and plugging everything inside \eqref{psi_cavity} one has:
\begin{multline}
    \psi_N(u)=-ua_2\phi_0^-(u)+\lambda a_2\partial_u\phi_0^-(u)+\lambda \psi_0^-(u)\left(a_1-a_2-2a_3\right)\\+\lambda \zeta_0^-(u)\left(
    3a_3-2a_2\right)+O\left(\frac{1}{\sqrt{N}}\right)\,.
\end{multline}

Finally, we simplify $\zeta_N$ too.
\begin{multline}\label{zeta_cavity}
    \zeta_N(u)=i\sqrt{N}\nu \left((\epsilon^{(2)}\epsilon^{(3)}-\bar{q} )e^{iu\xi_{1\ast}^-+iu\frac{\epsilon^{(1)}\epsilon^\ast}{\sqrt{N}}}\right)=\\=
    i\sqrt{N}\nu \left((\epsilon^{(2)}\epsilon^{(3)}-\bar{q} )e^{iu\xi_{1\ast}^-}
    \left(1+iu\frac{\epsilon^{(1)}\epsilon^\ast}{\sqrt{N}}\right)\right)+O\left(\frac{1}{\sqrt{N}}\right)=\\=i\sqrt{N}\nu \left((\epsilon^{(2)}\epsilon^{(3)}-\bar{q} ) e^{iu\xi_{1\ast}^-}\right)-u\,\nu \left(\epsilon^{(1)}\epsilon^\ast
    (\epsilon^{(2)}\epsilon^{(3)}-\bar{q} ) e^{iu\xi_{1\ast}^-}\right)+O\left(\frac{1}{\sqrt{N}}\right)\,.
\end{multline}
As usual, we first treat the second term with \eqref{main_estimate1}:
\begin{align}
    \nu \left(\epsilon^{(1)}\epsilon^\ast
    (\epsilon^{(2)}\epsilon^{(3)}-\bar{q} ) e^{iu\xi_{1\ast}^-}\right)=a_3\phi_0^-(u)+\eta
\end{align}
where $\eta=O(\nu(|R_{1*}^--\bar{q}|))=O(1/\sqrt{N})$.
The first term in the last line of \eqref{zeta_cavity} is instead treated with \eqref{main_estimate2}:
\begin{align}
    \nu_0\left((\epsilon^{(2)}\epsilon^{(3)}-\bar{q} ) e^{iu\xi_{1\ast}^-}\right)=\nu_0(\epsilon^{(2)}\epsilon^{(3)}-\bar{q} ) \nu_0\left(e^{iu\xi_{1\ast}^-}\right)+\nu_0'\left((\epsilon^{(2)}\epsilon^{(3)}-\bar{q} ) e^{iu\xi_{1\ast}^-}\right)+\eta'
\end{align}
The first term on the r.h.s. is still zero thanks to the Nishimori identities. We focus on the second one, using \eqref{derivata_nu} with $n=3$:
\begin{multline}
    \nu_0'\left((\epsilon^{(2)}\epsilon^{(3)}-\bar{q} ) e^{iu\xi_{1\ast}^-}\right)=\lambda a_2\nu_0\left((R_{12}^--\bar{q})e^{iu\xi_{1\ast}^-}\right)+\lambda a_2 \nu_0\left((R_{13}^--\bar{q})e^{iu\xi_{1\ast}^-}\right)\\
    +\lambda a_1\nu_0\left((R_{23}^--\bar{q})e^{iu\xi_{1\ast}^-}\right)-3\lambda a_3\nu_0\left((R_{14}^--\bar{q})e^{iu\xi_{1\ast}^-}\right)
    -3\lambda a_2 \nu_0\left((R_{24}^--\bar{q})e^{iu\xi_{1\ast}^-}\right)\\
    -3\lambda a_2\nu_0\left((R_{34}^--\bar{q})e^{iu\xi_{1\ast}^-}\right)+\lambda a_3\nu_0\left((R_{1*}^--\bar{q})e^{iu\xi_{1\ast}^-}\right)
    +\lambda a_2\nu_0\left((R_{2*}^--\bar{q})e^{iu\xi_{1\ast}^-}\right)\\+\lambda a_2\nu_0\left((R_{3*}^--\bar{q})e^{iu\xi_{1\ast}^-}\right)
    -3\lambda a_3\nu_0\left((R_{*4}^--\bar{q})e^{iu\xi_{1\ast}^-}\right)
    +6\lambda a_3\nu_0\left((R_{45}^--\bar{q})e^{iu\xi_{1\ast}^-}\right)\,.
\end{multline}
Using the definition of the $0$-time functions and plugging the previous inside \eqref{zeta_cavity} we get
\begin{multline}
    \zeta_N(u)=-ua_3\phi_0^-(u)+\lambda a_3\partial_u\phi_0^-(u)+\lambda \psi_0^-(u)\left( 4  a_2-6  a_3 \right)\\
    +\lambda\zeta_0^-(u)\left( a_1-6a_2+6a_3 \right)+O\left(\frac{1}{\sqrt{N}}\right)\,.
\end{multline}
With the computations above we have just shown that
\begin{multline}\label{sistem}
    \begin{pmatrix}
    \partial_u\phi_N\\
    \psi_N\\
    \zeta_N
    \end{pmatrix}=\lambda 
    \begin{pmatrix}
        a_1 & -2a_2 & a_3\\
        a_2 & a_1-a_2-2a_3 & 3a_3-2a_2\\
        a_3 & 4a_2-6a_3 & a_1-6a_2+6a_3\\
    \end{pmatrix}
     \begin{pmatrix}
    \partial_u\phi_0^-\\
    \psi_0^-\\
    \zeta_0^-
    \end{pmatrix}-u\phi_0^-(u)
     \begin{pmatrix}
    a_1\\
    a_2\\
    a_3
    \end{pmatrix}+O\left(\frac{1}{\sqrt{N}}\right)\,.
\end{multline}
We recall again that $O(1/\sqrt{N})$ has to be intended as a quantity whose norm can be bounded with $C_u/\sqrt{N}$ where $C_u$ has an at most quadratic dependence on $u$.

We claim that, up to $O\left(N^{-1/2}\right)$, one can replace in the r.h.s. of \eqref{sistem}
the quantities $\phi_0^-, \psi_0^-,$ and $\zeta_0^-$ by $\phi, \psi$ and $\zeta$ respectively. Consider for instance the difference between $\partial_u\phi$ and $\partial_u\phi_0^-$, the other terms can be treated analogously. By Lemma \ref{replacement_lemma} with $n=2$ one readily obtains that  $\partial_u\phi_N-\partial_u\phi_0^-=O\left(\dfrac{1}{\sqrt{N}}\right)$. Therefore equation \eqref{sistem} can be now rewritten in a more compact form as
\begin{align}\label{eq:system_compact}
    (\mathbbm{1}-\lambda M)
    \begin{pmatrix}
    \partial_u\phi_N\\
    \psi_N\\
    \zeta_N
    \end{pmatrix}=-u\phi_N(u)
     \begin{pmatrix}
    a_1\\
    a_2\\
    a_3
    \end{pmatrix}+O\left(\frac{1}{\sqrt{N}}\right)\,,
\end{align}
where $M$ is the $3\times3$ matrix in \eqref{sistem}.  In \cite{ElAlaoui} it is proved  that if  $\lambda$ is such that  $p(\lambda)$ is twice differentiable then  $\mathbbm{1}-\lambda M$ is invertible and in this case
\begin{align}\label{eq:final_covariances}
    \big[\mathbbm{1}-\lambda M\big]^{-1} \begin{pmatrix}
    a_1\\
    a_2\\
    a_3
    \end{pmatrix}=\begin{pmatrix}
        \frac{1}{\lambda}\left(-1+\frac{2}{1- \mu_2}+\frac{2}{1- \mu_1}+\frac{-3+3 \lambda a_1-2 \lambda  a_2}{\left(1- \mu_1\right)^2}\right)\\
        \frac{1}{\lambda }\left(\frac{-3+3 \lambda  a_1-2 \lambda  a_2}{\left(1- \mu_1\right)^2}+\frac{3}{1- \mu_2}\right)\\
        \frac{4 \lambda  a_2^2+(1-\lambda  a_1-5 \lambda  a_2) a_3+2 \lambda  a_3^2}{\left(1- \mu_1\right)^2\left(1- \mu_2\right)}
    \end{pmatrix}\,,
\end{align}
with $\mu_2\leq\mu_1$ and
\begin{align}
& \mu_1(\lambda)=\lambda(a_1-2 a_2+a_3) \\
& \mu_2(\lambda)=\lambda(a_1-3 a_2+2 a_3)\,.
\end{align}
In particular, when $p(\lambda)$ is twice differentiable one has $\mu_1(\lambda)<1$ ensuring the  existence of $\big[\mathbbm{1}-\lambda M\big]^{-1}$. Therefore the differential system \eqref{eq:system_compact} is equivalent to

\begin{align}\label{eq:pre_limit_version}
    \begin{pmatrix}
    \partial_u\phi_N\\
    \psi_N\\
    \zeta_N
    \end{pmatrix}=-u\phi_N(u)\big[\mathbbm{1}-\lambda M\big]^{-1}
     \begin{pmatrix}
    a_1\\
    a_2\\
    a_3
    \end{pmatrix}+O\left(\frac{1}{\sqrt{N}}\right)\,.
\end{align}
The first line of the previous system, together with the initial condition $\phi_N(0)=1$ is a Cauchy problem for $\phi_N$:
\begin{equation}\label{odephiN}
\begin{cases}
\partial_u\phi_N = - C \, u \,\phi_N(u)+ O\left(N^{-1/2}\right) ,\\
\phi_N(0)=1
\end{cases}
\end{equation}
where $C:=\sum_{j=1}^3\big[\mathbbm{1}-\lambda M\big]^{-1}_{1,j}a_j\,$. One can verify that $\phi_N$ is a sequence of absolutely continuous functions and then by Lemma \ref{lem:limit_ODE} one gets  $\phi_N\xrightarrow[]{N\to\infty}\phi$ where $\phi$ solves the same differential problem without the remainder. More precisely one obtains
\begin{align}\label{eq:cov_diagonal}
    \lim_{N\to\infty}\phi_N(u)=\phi(u)=\exp\Big(-\frac{u^2 C}{2}\Big)\,,\quad 
\end{align}
and then by Levi's continuity  Theorem we conclude that  $\xi_{1\ast}$ converges  in distribution to a  centered Gaussian with variance $C$.

\begin{remark} 
As a last check we note that $\partial_u^3\phi_N(u)$ is bounded for any fixed $u\in[-K,K]$ for instance, thanks to \eqref{eq:concentration_elalaoui}. Hence, we can exchange the second derivative w.r.t. $u$ with the $N$ limit (see e.g. \cite[Theorem 1.1]{ACid}). If we evaluate such derivative in $u=0$ we obtain the matching
\begin{align}
    \partial_u^2\phi_N(0)=-\nu(\xi_{1\ast}^2)\xrightarrow[]{N\to\infty}\partial_u^2\phi(0)=-C
\end{align}
which is in agreement with \cite{ElAlaoui}. Equivalently, one could have argued that thanks to \eqref{eq:concentration_elalaoui} the sequence $\xi_{1\ast}^2=N(q_{1\ast}-\bar{q})^2$ is uniformly integrable. Therefore, besides the convergence in distribution, one would have also convergence of moments up to the third for instance, that would be $0$.

The same argument holds true for the second derivatives of $\psi_N$ and $\zeta_N$. From the limiting version of \eqref{eq:pre_limit_version} it is clear that $\psi(u):=\lim_{N\to\infty}\psi_N(u)$ must satisfy
\begin{align}
    \psi(u)=-u\phi(u)\sum_{j=1}^3\big[\mathbbm{1}-\lambda M\big]^{-1}_{2,j}a_j
\end{align}
which in turn entails
\begin{align}\label{eq:cov_offdiagonalcommon}
    \partial_u\psi_N(0)=-\nu(\xi_{2\ast}\xi_{1\ast})\xrightarrow[]{N\to\infty}\partial_u\psi(0)=-\sum_{j=1}^3\big[\mathbbm{1}-\lambda M\big]_{2,j}^{-1}a_j\,.
\end{align}
Analogously:
\begin{align}\label{eq:cov_offdiagonal_different}
    \partial_u\zeta_N(0)=-\nu(\xi_{23}\xi_{1\ast})\xrightarrow[]{N\to\infty}\partial_u\zeta(0)=-\sum_{j=1}^3\big[\mathbbm{1}-\lambda M\big]_{3,j}^{-1}a_j\,.
\end{align}

\end{remark}

\section{Proof of Theorem \ref{th:main}}\label{sec:proofmain}
We now prove the main theorem with a generic number of replicas. We are going to prove that the vector $\boldsymbol{\xi}=(\xi_{ab})_{1\leq a<b\leq n}$ tends to a Gaussian vector with a suitable covariance matrix $\Sigma=(\Sigma_{ab,cd})$ where $1\leq a<b\leq n, 1\leq c<d\leq n$. Let us fix the notations:
\begin{align}\label{notations}
    &\mathbf{u}=(u_{ab})_{1\leq a<b\leq n}\,,\quad \boldsymbol{\xi}=(\xi_{ab})_{1\leq a<b\leq n}\,,\quad \mathbf{u}^\intercal\boldsymbol{\xi}=\sum_{1\leq a<b\leq n}u_{ab}\xi_{ab}\,.
\end{align}
Following the same procedure as before we now define the fields
\begin{align}
    &\phi_N(\mathbf{u})=\nu\left(e^{i\mathbf{u}^\intercal \boldsymbol{\xi}}\right)\\
    &\vec{\psi}_N(\mathbf{u})=\Big(i\nu\left(\xi_{r,n+1}e^{i\mathbf{u}^\intercal \boldsymbol{\xi}}\right)\Big)_{r=1,\dots,n}\\
    &\zeta_N(\mathbf{u})=i\nu\left(\xi_{n+1,n+2}e^{i\mathbf{u}^\intercal \boldsymbol{\xi}}\right)\,,
\end{align}
together with the global field
\begin{align}
    \boldsymbol{\Gamma}_N(\mathbf{u})=\begin{pmatrix}
        \partial_\mathbf{u}\phi_N(\mathbf{u})\\
        \vec{\psi}_N(\mathbf{u})\\
        \zeta_N(\mathbf{u})
    \end{pmatrix}=\begin{pmatrix}
        i\nu\left(\boldsymbol{\xi}e^{i\mathbf{u}^\intercal \boldsymbol{\xi}}\right)\\
        \Big(i\nu\left(\xi_{r,n+1}e^{i\mathbf{u}^\intercal \boldsymbol{\xi}}\right)\Big)_{r=1,\dots,n}\\
        i\nu\left(\xi_{n+1,n+2}e^{i\mathbf{u}^\intercal \boldsymbol{\xi}}\right)
    \end{pmatrix}\,.
\end{align}
We will also need to handle the fields without the last particle:
\begin{align}
    &\phi_0^-(\mathbf{u})=\nu_0\left(e^{i\mathbf{u}^\intercal \boldsymbol{\xi}^-}\right)\\
    &\vec{\psi}_0^-(\mathbf{u})=\Big(i\nu_0\left(\xi_{r,n+1}^-e^{i\mathbf{u}^\intercal \boldsymbol{\xi}^-}\right)\Big)_{r=1,\dots,n}\\
    &\zeta_0^-(\mathbf{u})=i\nu_0\left(\xi_{n+1,n+2}^-e^{i\mathbf{u}^\intercal \boldsymbol{\xi}^-}\right)\,,
\end{align}that will be collected in the vector $\boldsymbol{\Gamma}_0^-(\mathbf{u})$. Let us also introduce the following coefficients
\begin{align}
    a_{rs}^{ab}=\nu_0\Big((\epsilon^r\epsilon^s-\bar{q})\epsilon^a\epsilon^b\Big)\,.
\end{align}
Regardless of the values of $r\neq s$ and $a\neq b$, they can assume only three values
\begin{align}
    a_{rs}^{ab}=\begin{cases}
    a_1\,,\quad a=r\,,b=s\\
    a_2\,,\quad a=r\,,b\neq s\\
    a_3\,,\quad a\neq r\,,b\neq s
    \end{cases}\,.
\end{align}
We collect the simplifications of the cavity fields into the following
\begin{lemma}[Cavity computation]\label{lem:cavitycomp}
Under the hypothesis of Theorem \ref{th:main} we have
\begin{align}\label{eq:final_cavity}
    \boldsymbol{\Gamma}_N(\mathbf{u})=-\mathbf{A}\mathbf{u}\phi_0^-(\mathbf{u})+\lambda\mathbf{B}\boldsymbol{\Gamma}_0^-(\mathbf{u})+O(N^{-1/2})\,,
\end{align}
    where $\mathbf{A}=(a_{rs}^{ab})_{1\leq r<s\leq n+1\,\bigvee\,(r,s)=(n+1,n+2)}^{1\leq a<b\leq n}$ and, for $1\leq r<r'\leq n$ 
    \begin{align}\label{Breplicas1}
    &\mathbf{B}_{rr'}=(B^{ab}_{rr'})=\begin{cases}
    A_{rr'}^{ab}\,,\quad a<b\leq n\\
    -(n-1)A^{a,n+1}_{rr'}\,,\quad a\leq n,\, b=n+1\quad\\
    \frac{n(n-1)}{2}A_{rr'}^{n+1,n+2}\,,\quad a=n+1\,,b=n+2
    \end{cases}\,,\quad\\
    \label{Breplicas2}
    &\mathbf{B}_{r,n+1}=(B_{r,n+1}^{ab})=\begin{cases}
    a_{r,n+1}^{ab}\,,\quad a<b\leq n\\
    a_{r,n+1}^{a,n+1}-na_{r,n+1}^{a,n+2}\,,\quad a\leq n\,,b=n+1\\
    \frac{n(n+1)}{2}a_{r,n+1}^{n+2,n+3}-na_{r,n+1}^{n+1,n+2}\,,\quad a=n+1\,,b=n+2
    \end{cases}\,,\\
    \label{Breplicas3}
    &\mathbf{B}_{n+1,n+2}=(B_{n+1,n+2}^{ab})=
    \begin{cases}
    a_{n+1,n+2}^{ab}\,,\quad a<b\leq n\\
    2a_{n+1,n+2}^{a,n+1}-(n+1)a_{n+1,n+2}^{a,n+3}\,,\quad a\leq n\,,b=n+1\\
    \frac{(n+1)(n+2)}{2}a_{n+1,n+2}^{n+3,n+4}-2(n+1)a_{n+1,n+2}^{n+1,n+3}\,,\quad (a,b)=(n+1,n+2)
    \end{cases}\,.
\end{align}
\end{lemma}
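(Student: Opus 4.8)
The plan is to reproduce, for a generic number of replicas, the cavity computation carried out explicitly for $n=2$ in Section~\ref{sec:proof2repplica}; the skeleton is identical and only the combinatorics of the coefficients changes. First, differentiating the imaginary exponential, each component of $\boldsymbol{\Gamma}_N$ is written in the form $i\sqrt{N}\,\nu\big((\epsilon^{(\ell)}\epsilon^{(\ell')}-\bar{q})\,e^{i\mathbf{u}^\intercal\boldsymbol{\xi}}\big)$ for the appropriate pair $(\ell,\ell')$, namely $(a,b)$ for $\partial_{u_{ab}}\phi_N$, $(r,n+1)$ for the $r$-th entry of $\vec{\psi}_N$, and $(n+1,n+2)$ for $\zeta_N$. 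Then, exactly as in \eqref{partial_phi_cavity}, one inserts $\xi_{ab}=\xi^-_{ab}+N^{-1/2}\epsilon^{(a)}\epsilon^{(b)}$ from \eqref{eq:defovermen} and Taylor-expands $e^{i\mathbf{u}^\intercal\boldsymbol{\xi}}=e^{i\mathbf{u}^\intercal\boldsymbol{\xi}^-}\big(1+\tfrac{i}{\sqrt{N}}\sum_{a<b}u_{ab}\epsilon^{(a)}\epsilon^{(b)}\big)+O(N^{-1})$, the quadratic remainder contributing $O(N^{-1/2})$ with at most polynomial growth in $\mathbf{u}$ by boundedness of the prior. This splits each field into a $-u$-proportional piece, controlled by the H\"older estimate \eqref{main_estimate1}, and a $\sqrt{N}$-enhanced piece $i\sqrt{N}\,\nu\big((\epsilon^{(\ell)}\epsilon^{(\ell')}-\bar{q})\,e^{i\mathbf{u}^\intercal\boldsymbol{\xi}^-}\big)$, to be treated with the sharper estimate \eqref{main_estimate2}.

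On the $\sqrt{N}$-enhanced piece one writes $\nu=\nu_0+\nu_0'+(\text{remainder})$ via \eqref{main_estimate2}; the remainder is $\sqrt{N}\cdot O\big(\nu((q^-_{1*}-\bar{q})^2)\big)=O(N^{-1/2})$ by \eqref{eq:concentration_quadratic}, after reducing every overlap there to $q_{1*}$ through the Nishimori identities \eqref{eq:NishimoriID}. The $\nu_0$ term vanishes: at $t=0$ the last site decouples into an effective single-site problem with external field $\sqrt{\lambda\bar{q}}\,z+\lambda\bar{q}\,x^*$, hence for $\ell\neq\ell'$ one has $\nu_0\big(\epsilon^{(\ell)}\epsilon^{(\ell')}\big)=\mathbb{E}\,x^*\langle x\rangle_{\lambda\bar{q},z,x^*}=F(\lambda\bar{q})=\bar{q}$ by the consistency equation \eqref{eq:consistency}, which is available precisely because $p$ is twice differentiable at $\lambda$. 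The first-order term $\nu_0'$ is then computed through Lemma~\ref{lemma_dv}: applying \eqref{derivata_nu} with $f=(\epsilon^{(\ell)}\epsilon^{(\ell')}-\bar{q})\,e^{i\mathbf{u}^\intercal\boldsymbol{\xi}^-}$ produces a finite sum of terms of the type $\nu_0\big((q^-_{mm'}-\bar{q})\,\epsilon^{(m)}\epsilon^{(m')}(\epsilon^{(\ell)}\epsilon^{(\ell')}-\bar{q})\,e^{i\mathbf{u}^\intercal\boldsymbol{\xi}^-}\big)$. Since at $t=0$ the $\epsilon$'s are decoupled and, conditionally on $(z,x^*)$, i.i.d., each such term factorizes as a single-site average times $\nu_0\big((q^-_{mm'}-\bar{q})\,e^{i\mathbf{u}^\intercal\boldsymbol{\xi}^-}\big)=N^{-1/2}\nu_0\big(\xi^-_{mm'}\,e^{i\mathbf{u}^\intercal\boldsymbol{\xi}^-}\big)$, so the leading $\sqrt{N}$ is absorbed and one is left with a linear combination of the components of $\boldsymbol{\Gamma}_0^-$ and of $\phi_0^-$; the single-site average equals one of $a_1,a_2,a_3$ according to the coincidences among $m,m',\ell,\ell'$, again by the Nishimori identities for the decoupled site together with $\mathbb{E}\langle x\rangle^2_{\lambda\bar{q},z,x^*}=\bar{q}$.

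To match this linear combination with the blocks $\mathbf{A},\mathbf{B}$ of the statement one exploits two symmetries. Applying \eqref{derivata_nu} to a function of $m$ replicas introduces two fresh replica indices $m+1,m+2$; taking $m=n,n+1,n+2$ for the three fields explains the progressively higher indices appearing in $\mathbf{B}_{rr'}$, $\mathbf{B}_{r,n+1}$ and $\mathbf{B}_{n+1,n+2}$ in \eqref{Breplicas1}--\eqref{Breplicas3}. By permutation invariance among the fresh replicas, every overlap $\xi^-_{r,m}$ with $r\leq n<m$ collapses onto the $r$-th entry of $\vec{\psi}_0^-$ and every $\xi^-_{m,m'}$ with $n<m<m'$ onto $\zeta_0^-$, so the whole first-order contribution is of the form $\lambda\,\mathbf{B}\,\boldsymbol{\Gamma}_0^-$; the $-u$-proportional pieces, whose \eqref{main_estimate1}-remainders are $O\big(\nu(|q^-_{1*}-\bar{q}|)\big)=O(N^{-1/2})$ by Jensen applied to \eqref{eq:concentration_elalaoui}, assemble into $-\mathbf{A}\mathbf{u}\,\phi_0^-(\mathbf{u})$. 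Since we assume $\lambda<\lambda_c$ or $P_X$ not symmetric about the origin, $c(t)>0$ up to $t=1$ in \eqref{eq:concentration_elalaoui}, so the $e^{-c(t)N}$ contributions are negligible against $N^{-1/2}$ and all remainders are uniform in $N$ with at most polynomial growth in $\mathbf{u}$, which gives \eqref{eq:final_cavity}.

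The only genuine difficulty is this last term-by-term bookkeeping: for each of the three choices of $(\ell,\ell')$ and each of the five sums in \eqref{derivata_nu} one must decide which summation index is identified with which replica index in $f$, distinguish ``genuine'' ($\leq n$), ``spectator'' and ``freshly created'' replicas, and propagate the combinatorial prefactors $n$, $n(n\pm 1)/2$ appearing in \eqref{derivata_nu} together with the sign and the value $a_1,a_2$ or $a_3$ of the single-site average. This is routine but long, so I would present it in full in the Appendix and keep in the main text only the final identity \eqref{eq:final_cavity}. The subsequent replacement of $\boldsymbol{\Gamma}_0^-$ and $\phi_0^-$ by $\boldsymbol{\Gamma}_N$ and $\phi_N$, via Lemma~\ref{replacement_lemma} exactly as in the passage from \eqref{sistem} to \eqref{eq:system_compact}, is not part of this lemma.
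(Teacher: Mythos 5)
Your proposal is correct and follows essentially the same route as the paper's Appendix proof: Taylor-expand the last-site contribution to the exponential, bound the $O(1)$ piece with \eqref{main_estimate1} and the $\sqrt{N}$-enhanced piece with \eqref{main_estimate2}, kill the $\nu_0$ term via $\nu_0(\epsilon^{(\ell)}\epsilon^{(\ell')})=F(\lambda\bar q)=\bar q$, and read off $\mathbf{A}$ and $\mathbf{B}$ from the factorized $\nu_0'$ terms. The only presentational difference is that the paper first records a ground-truth-free variant of Lemma \ref{lemma_dv} (eq. \eqref{derivata_nu_gen}, with three sums and prefactors $n-1$, $n(n-1)/2$) before the bookkeeping, which is precisely the Nishimori collapse of the five sums of \eqref{derivata_nu} that you invoke implicitly.
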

The proof of Lemma \ref{lem:cavitycomp} is deferred to the Appendix. Thanks to Lemma \ref{replacement_lemma} we can replace $\phi_0^-$ with $\phi$ and the fields $\boldsymbol{\Gamma}_0^-$ with $\boldsymbol{\Gamma}$. Equation \eqref{eq:final_cavity} then rewrites as
\begin{align}
    (\mathbbm{1}-\lambda \mathbf{B})\boldsymbol{\Gamma}_N(\mathbf{u})=-\mathbf{A}\mathbf{u}\phi_N(\mathbf{u})+O(N^{-1/2})\,.
\end{align}
Theorem \ref{overlap_invarianceThm} grants us that if we move along the trajectories there characterized the overlap is constant, and so will be $\mathbf{B}$. Hence, we conclude that $\mathbbm{1}-\lambda\mathbf{B}$ is invertible for almost all $\lambda$ along these trajectories. In particular, if $\mathbbm{1}-\tilde\lambda \mathbf{B}$ is not invertible, there is another point on the $(\lambda,h)$ plane with a value $\lambda '$ arbitrarily close to $\tilde\lambda$ such that $\mathbbm{1}-\lambda' \mathbf{B}$ is invertible. For future convenience, we also extend the definition of the characteristic function to the plane $(\lambda.h)$ as follows:
\begin{align}
    \label{eq:extended_phi}
    \phi_N(\bu;\lambda,h)=\mathbb{E}_{\mathbf{z},\mathbf{h},\bx^*}\frac{\int\prod_{a=1}^n dP_X(\bx^{(a)})e^{-\sum_{a=1}^n\tilde H_N(\mathbf{x}^{(a)};\mathbf{x}^*,\mathbf{z},\mathbf{h})+i\bu^\intercal\boldsymbol{\xi}}}{\int\prod_{b=1}^n dP_X(\bx^{'(b)})e^{-\sum_{a=1}^n\tilde H_N(\mathbf{x}^{'(b)};\mathbf{x}^*,\mathbf{z},\mathbf{h})}}
\end{align}
with $\tilde H$ defined in \eqref{eq:Hamiltonian_complete} and $\xi_{ab}=\sqrt{N}(q_{ab}-\bar{q}(\lambda,h))$. We will make the dependency on $\lambda$ and $h$ explicit only when needed.

Suppose for the moment that $\lambda$ is such that we can invert $\mathbbm{1}-\lambda\mathbf{B}$ in the original problem (without external magnetic fields). We then rewrite the equation as
\begin{align}
    \begin{pmatrix}
        \partial_\mathbf{u}\phi_N(\mathbf{u})\\
        \vec{\psi}_N(\mathbf{u})\\
        \zeta_N(\mathbf{u})
    \end{pmatrix}=-(\mathbbm{1}-\lambda \mathbf{B})^{-1}\mathbf{A}\mathbf{u}\phi_N(\mathbf{u})+O(N^{-1/2})
\end{align}
whose first $n(n-1)/2$ components constitute a linear differential equation for $\phi$ with the initial condition $\phi(0)=1$. We recall again that the remainder has an at most quadratic dependence on $\mathbf{u}$. Hence, we can use Lemma \ref{lem:limit_ODE} with $f_N(s)=\phi_N(s\mathbf{u})$ to find a solution for $\phi(\mathbf{u}):=\lim_{N\to\infty}\phi_N(\mathbf{u})$. The solution of such equation must then be of the form
\begin{align}
    \phi(\mathbf{u})=\exp\Big(-\frac{1}{2}\mathbf{u}^\intercal \boldsymbol{\Sigma}\mathbf{u}\Big)
\end{align}where $\boldsymbol{\Sigma}$ is a matrix, whose elements depend on those of $(\mathbbm{1}-\lambda \mathbf{B})^{-1}\mathbf{A}$. 
Even when it is possible, inverting $\mathbbm{1}-\lambda\mathbf{B}$ is impractical, especially because our statement is for a generic number of replicas $n$. Nevertheless, thanks again to \cite[Theorem 1.1]{ACid}, we can exchange derivatives w.r.t. to any component of $\mathbf{u}$ in $\mathbf{u}=0$ (note that the control on each partial derivative is uniform w.r.t. the other coordinates thanks to the boundedness of the imaginary exponential). This allows us to match the elements of $\boldsymbol{\Sigma}$ with the following covariances:
\begin{align}\label{covariances}
\begin{split}
    &\Sigma_{ab,ab}=\sum_{j=1}^m\big[\mathbbm{1}-\lambda M\big]^{-1}_{1,j}a_j\,,\\
    &\Sigma_{ab,ac}=\sum_{j=1}^m\big[\mathbbm{1}-\lambda M\big]^{-1}_{2,j}a_j\,,\\
    &\Sigma_{ab,cd}=\sum_{j=1}^m\big[\mathbbm{1}-\lambda M\big]^{-1}_{3,j}a_j
\end{split}
\end{align}
with $a<b$, $c<d$, $a\neq c$, $b\neq d$, whose explicit form can be read in \eqref{eq:final_covariances}.

On the other hand, for those values $\tilde \lambda$ at which $\mathbbm{1}-\tilde\lambda \mathbf{B}$ is not invertible, we consider a point $(\lambda',h(\lambda'))$ as in Theorem \ref{overlap_invarianceThm} arbitrarily close to $(\tilde \lambda,0)$. For this point we can invert the matrix and proceed as before. Now, we note that the extended characteristic function \eqref{eq:extended_phi} is Lipschitz along the trajectories of Theorem \ref{overlap_invarianceThm}, and so will be its limit. A straightforward computation of the directional derivative indeed yields
\begin{align}
\begin{split}
    \frac{d}{d\lambda'} \phi_N(\mathbf{u};\lambda',h(\lambda'))&=(\partial_{\lambda'}-\bar{q}\partial_h)\phi_N(\mathbf{u};\lambda',h(\lambda'))\\
    &=O(1)+ \frac{1}{2}\mathbb{E}\Big\langle
    e^{i\mathbf{u}^\intercal\boldsymbol{\xi}}\Big[
    \sum_{0\leq a<b\leq n}\xi_{ab}^2-n\sum_{a=1}^n\xi^2_{a,n+1}+\frac{n(n-1)}{2}\xi_{n+1,n+2}^2
    \Big]
    \Big\rangle
\end{split}
\end{align}
where by $O(1)$ we mean a quantity that is bounded by a constant, depending only on the width of the support of $P_X$ and on the number of replicas $n$. Recall that $\bar{q}(\lambda',h(\lambda'))=\bar{q}(\lambda)$ for any $\lambda'\in[0,\lambda]$. Each of the other three contributions appearing above can be bounded thanks to Theorem \ref{thm:concalaoui}, leading to bounded derivative uniformly in $N$. Hence the non-invertibility points on the $\lambda$-axis can be completed by continuity.

We still need to make sure that $\phi_N(\mathbf{u};\tilde \lambda,0)\xrightarrow[]{N\to\infty}\phi(\mathbf{u};\tilde \lambda,0)$, the latter being well defined if $p$ is twice differentiable at $\tilde \lambda$ (recall that $\mathbbm{1}-\tilde\lambda M$ in \eqref{covariances} is thus invertible). This is clearly the case thanks to uniform (in $N$) Lipschitzness of the characteristic function. In fact:
\begin{align}
    \begin{split}
        &|\phi_N(\mathbf{u};\tilde \lambda,0)-\phi(\mathbf{u};\tilde \lambda,0)|=\\
        &|\phi_N(\mathbf{u};\tilde \lambda,0)-\phi_N(\mathbf{u};\lambda',h(\lambda'))+\phi_N(\mathbf{u};\lambda',h(\lambda'))-\phi(\mathbf{u};\lambda',h(\lambda'))+\phi(\mathbf{u};\lambda',h(\lambda'))-\phi(\mathbf{u};\tilde \lambda,0)|\\
        &\leq 2K|\tilde \lambda-\lambda'|+|\phi_N(\mathbf{u};\lambda',h(\lambda'))-\phi(\mathbf{u};\lambda',h(\lambda'))|
    \end{split}
\end{align}
where $K$ denotes the Lipschitz constant, both of $\phi_N$ and its limit $\phi$, and we simply used the triangular inequality repeatedly. Now we let $N\to\infty$ and then $\lambda'\to\lambda$ obtaining, and the proof is complete.

\section{Conclusions and perspectives}

In this paper we considered a class of disordered mean-field spin-glasses defined by a Hamiltonian with two body non centered Gaussian interaction and spin distribution with bounded support. The Hamiltonian is such that the model fulfills the Nishimori identities, and in this case the free energy can be expressed as the solution of a one dimensional variational principle. This is indeed a consequence of the Nishimori identities that force the order parameter, i.e. the overlap between two spins configurations, to be self-averaging everywhere but the critical points. We studied the fluctuations of the overlap around its mean and proved that the rescaled overlap vector converges, everywhere but the critical points, to a Gaussian vector in the thermodynamic limit for an arbitrary number of involved real replicas of the system. This result improves our knowledge of the statistical mechanics properties of such systems. {We notice that our techniques cannot be applied if the model is at the critical point. For Ising spins, for example, criticality occurs at $\lambda=1$, and in this case the quantities \eqref{eq:final_covariances}, as well as the elements of covariance of the Gaussian density in the CLT,  diverge. } Nevertheless, we believe the methods employed here are robust enough to generalize the result in various directions.  
 
A first generalization can be obtained replacing one dimensional spins with vectors spins. This corresponds to a finite rank matrix estimation problem in inference \cite{lesieur2015mmse,Lelarge2017FundamentalLO}. Here the order parameter becomes a "matrix overlap" and it is possible to show that Nishimori identities imply its concentration \cite{overlap_jean}. Hence it is reasonable to expect that the matrix overlap satisfies a Central Limit Theorem everywhere but at the critical points.

Recently, it was shown in \cite{Mukherjee2021FluctuationsOT,contucci2023limit} how to extend Ellis and Newman's results to the Curie-Weiss $p$-spin model. We believe that a similar extension can hold also in a wider context, i.e. for disordered models on the Nishimori line with higher order interactions. The corresponding inferential problem is called \emph{low rank tensor estimation}, which exhibits the same replica symmetric features of the standard low rank matrix estimation 
\cite{BarMacrLun20,Layered_Tensor_Jean_Miolane,Mourrat2}.

Another direction in which we believe one can extend the fluctuation analysis is for the class of mean-field multispecies models. They are characterized by an invariance under block-permutation among particles of different species. Their deterministic formulation, with the related fluctuations properties, were studied in \cite{Fedele10,Lowe}. In the disordered case with centered Gaussian couplings, the free energy was computed only under some technical convexity assumption \cite{MSK_original,panchenko_multi-SK,Dey2020FluctuationRF} or for spherical spins \cite{Baik,Bates,Subag}. Multispecies models on the Nishimori line instead were solved in
\cite{MSKNL,DBMNL} also in absence of convexity. For the latter, the central limit theorems for the rescaled overlap vector should be obtained with the same methods used in this work.

Finally, the free energy of spin glasses with rotationally invariant couplings, i.e. whose law is invariant under orthogonal transformation, was investigated in the high temperature phase \cite{bhattacharya2016high,fan2021replica}. In particular, the free energy fulfills a replica-symmetric variational principle, with the overlap as the order parameter. Moreover, using methods from Statistical Physics, it is possible to find a replica symmetric formula for the free entropy of inferential models with rotationally invariant noise, where the overlap appears again as an order parameter \cite{barbier2022bayes}. It would thus be interesting to study the fluctuation properties, both for the free energy and the overlap, in those cases.

\paragraph{Acknowledgments} The authors thank F. L. Toninelli for insightful discussions on \cite{quadratic_replica_coupling,GuerraToninelliCLT}. We also acknowledge J. Barbier, W. -K. Chen and A. El Alaoui for fruitful interactions. The work is partially supported by GNFM (Indam), EU H2020 ICT48 project Humane AI Net.

\small

\appendix

\section{Intermediate results for the proof of Theorem \ref{th:main}}
We begin with a modified version of Lemma \ref{lemma_dv}:
\begin{lemma}
For any bounded function $f$ of $n$ replicas (and not of the ground truth) the following holds:
\begin{multline}\label{derivata_nu_gen}
\frac{\mathrm{d}}{\mathrm{d} t} \nu_{t}(f)=\frac{\lambda}{2} \sum_{1 \leq l \neq l^{\prime} \leq n} \nu_{t}\left(\left(q_{l, l^{\prime}}^{-}-\bar{q}\right) \epsilon^{(l)} \epsilon^{\left(l^{\prime}\right)} f\right)-\lambda (n-1) \sum_{l=1}^{n} \nu_{t}\left((q_{l, n+1}^{-}-\bar{q}\right) \epsilon^{(l)} \epsilon^{(n+1)} f )\\
+\lambda \frac{n(n-1)}{2} \nu_{t}\left(\left(q_{n+1, n+2}^{-}-\bar{q}\right) \epsilon^{(n+1)} \epsilon^{(n+2)} f\right)\,.
\end{multline}
\end{lemma}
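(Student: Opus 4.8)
The plan is to derive \eqref{derivata_nu_gen} directly from Lemma \ref{lemma_dv} together with the Nishimori identities, exploiting the fact that here $f$ does not depend on the ground truth. First I would apply Lemma \ref{lemma_dv} as it stands: of the five terms in \eqref{derivata_nu}, the first, second and fifth already have the shape appearing in \eqref{derivata_nu_gen}, whereas the third and the fourth carry the ground-truth component $\epsilon^{*}$ and must be rewritten.

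The key step is to remove $\epsilon^{*}$ from those two terms. Since the interpolating model \eqref{eq:cavity_Hamiltonian} lies on the Nishimori line for every $t\in[0,1]$, the quenched measure $\nu_{t}$ obeys the Nishimori identities, so inside any $\nu_{t}$-expectation the ground truth can be exchanged for an independent replica as long as the integrand does not already depend on it; this is precisely the case here, $f$ being a function of replicas $1,\dots,n$ only. In the third term I would replace $\mathbf{x}^{*}$ by a fresh replica $\mathbf{x}^{(n+1)}$, which turns $q_{l,*}^{-}$ into $q_{l,n+1}^{-}$ and $\epsilon^{*}$ into $\epsilon^{(n+1)}$, so that the term becomes $\lambda\sum_{l=1}^{n}\nu_{t}\big((q_{l,n+1}^{-}-\bar{q})\,\epsilon^{(l)}\epsilon^{(n+1)}f\big)$. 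In the fourth term the index $n+1$ is already occupied, so I would instead replace $\mathbf{x}^{*}$ by a second fresh replica $\mathbf{x}^{(n+2)}$, turning that term into $-\lambda n\,\nu_{t}\big((q_{n+1,n+2}^{-}-\bar{q})\,\epsilon^{(n+1)}\epsilon^{(n+2)}f\big)$.

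Finally I would collect like terms. The rewritten third term adds to the second term of \eqref{derivata_nu}, giving coefficient $-\lambda n+\lambda=-\lambda(n-1)$ in front of $\sum_{l=1}^{n}\nu_{t}\big((q_{l,n+1}^{-}-\bar{q})\,\epsilon^{(l)}\epsilon^{(n+1)}f\big)$; the rewritten fourth term adds to the fifth term of \eqref{derivata_nu}, giving coefficient $\lambda\frac{n(n+1)}{2}-\lambda n=\lambda\frac{n(n-1)}{2}$ in front of $\nu_{t}\big((q_{n+1,n+2}^{-}-\bar{q})\,\epsilon^{(n+1)}\epsilon^{(n+2)}f\big)$. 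Together with the untouched first term of \eqref{derivata_nu}, this is exactly \eqref{derivata_nu_gen}. I do not expect a genuine obstacle: the one point that needs care is the replica bookkeeping — making sure the index to which $\mathbf{x}^{*}$ is relabelled is truly new (hence $n+2$, not $n+1$, in the fourth term) and that the two numerical prefactors recombine as claimed.
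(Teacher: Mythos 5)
Your proposal is correct and is exactly the paper's argument: the paper's proof is a one-line remark that the result "follows from that of Lemma \ref{lemma_dv} applying Nishimori identities, and uses the fact that $f$ does not depend on the ground truth," which is precisely the relabelling of $\epsilon^{*}$ to fresh replicas $n+1$ and $n+2$ in the third and fourth terms of \eqref{derivata_nu} that you carry out. Your bookkeeping of the coefficients ($-\lambda n+\lambda=-\lambda(n-1)$ and $\lambda\frac{n(n+1)}{2}-\lambda n=\lambda\frac{n(n-1)}{2}$) is also correct.
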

\begin{proof}
The proof follows from that of Lemma \eqref{lemma_dv} applying Nishimori identities, and uses the fact that $f$ does not depend on the ground truth.
\end{proof}

\subsection{Proof of Lemma \ref{lem:cavitycomp}}
Concerning the first $n(n-1)/2$ components of $\boldsymbol{\Gamma}_N$, given $1\leq r<r'\leq n$ we have 
\begin{multline}\label{partial_phi_gen}
\partial_{u_{rr'}}\phi_N(\mathbf{u})=i\nu\left(\xi_{rr'}e^{i\mathbf{u}^\intercal\boldsymbol{\xi}}\right)=i\sqrt{N}\nu\left((\epsilon^{r}\epsilon^{r'}-\bar{q})e^{i\mathbf{u}^\intercal\boldsymbol{\xi}}\right)=\\
    =i\sqrt{N}\nu\left((\epsilon^{r}\epsilon^{r'}-\bar{q})e^{i\mathbf{u}^\intercal\boldsymbol{\xi}^-+\frac{i}{\sqrt{N}}\sum_{a<b}^nu_{ab}\epsilon^a\epsilon^b}\right)=i\sqrt{N}\nu\left(
    (\epsilon^r\epsilon^{r'}-\bar{q})e^{i\mathbf{u}^\intercal\boldsymbol{\xi}^-}\right)\\
    -\sum_{a<b}^n u_{ab}\,\nu\left((\epsilon^r\epsilon^{r'}-\bar{q})\epsilon^a\epsilon^be^{i\mathbf{u}^\intercal\boldsymbol{\xi}^-}\right)+O\Big(\frac{1}{\sqrt{N}}\Big)\,.
\end{multline}
Here $O(1/\sqrt{N})$ can depend on $\mathbf{u}$ at most quadratically.
The last term can be simplified thanks to \eqref{main_estimate1}:
\begin{align}
    \sum_{a<b}^n u_{ab}\,\nu\left((\epsilon^r\epsilon^{r'}-\bar{q})\epsilon^a\epsilon^be^{i\mathbf{u}^\intercal\boldsymbol{\xi}^-}\right)=\left[\sum_{a<b}^nu_{ab}\,a^{ab}_{rr'}\right]\phi_0^-(\mathbf{u})+\delta\,.
\end{align}
Here $\delta=O(\nu(|q_{1\ast}^--\bar{q}|))=O(N^{-1/2})$ independently of $\mathbf{u}$. The term proportional to $\sqrt{N}$ needs instead the finer estimate \eqref{main_estimate2} with $\tau_1=1$, $\tau_2=\infty$, and consequently \eqref{derivata_nu_gen}:
\begin{multline}
    i\sqrt{N}\nu\left(
    (\epsilon^r\epsilon^{r'}-\bar{q})e^{i\mathbf{u}^\intercal\boldsymbol{\xi}^-}\right)=i\sqrt{N}\Big[
    \lambda\sum_{a<b}^n\nu_0\Big( (\epsilon^r\epsilon^{r'}-\bar{q})\epsilon^a\epsilon^b(q_{ab}^--\bar{q}) e^{i\mathbf{u}^\intercal\boldsymbol{\xi}^-}\Big)\\
    -\lambda(n-1)\sum_{a=1}^n\nu_0\Big((\epsilon^r\epsilon^{r'}-\bar{q})\epsilon^a\epsilon^{n+1}(q_{a,n+1}^--\bar{q}) e^{i\mathbf{u}^\intercal\boldsymbol{\xi}^-}\Big)\\
    +\frac{\lambda n(n-1)}{2}\nu_0\Big((\epsilon^r\epsilon^{r'}-\bar{q})\epsilon^{n+1}\epsilon^{n+2}(q_{n+1,n+2}^--\bar{q}) e^{i\mathbf{u}^\intercal\boldsymbol{\xi}^-}\Big)
    \Big]+\delta'\,,
\end{multline}
where $\delta'=\sqrt{N}O(\nu((q_{1\ast}^--\bar{q})^2))=O(N^{-1/2})$. Now consider that at $t=0$ the $\epsilon$ particle decouples from the other $N-1$-s. Hence we recover the coefficients $a_{rr'}^{ab}$. For future convenience we gather them into a rectangular matrix $\mathbf{A}=(a_{rs}^{ab})_{1\leq r<s\leq n+1\,\bigvee\,(r,s)=(n+1,n+2)}^{1\leq a<b\leq n}$.
Plugging everything back into \eqref{partial_phi_gen} we get
\begin{align}
    \partial_{u_{rr'}}\phi_N(\mathbf{u})=-(\mathbf{A}\mathbf{u})_{rr'}\phi_0^-(\mathbf{u})+\lambda \mathbf{B}_{rr'}\cdot\boldsymbol{\Gamma}_0^-(\mathbf{u})+O\Big(\frac{1}{\sqrt{N}}\Big)
\end{align}
with $\mathbf{B}_{rr'}$ as in \eqref{Breplicas1}, and a remainder that can depend at most quadratically in $\mathbf{u}$.

From this moment on, every remainder $O(N^{-1/2})$ depends at most quadratically on $\mathbf{u}$ if not specified. Let us turn to $\vec{\psi}$:
\begin{multline}
    \psi_{N,r}(\mathbf{u})=i\sqrt{N}\nu\Big((\epsilon^{r}\epsilon^{n+1}-\bar{q})e^{i\mathbf{u}^\intercal\boldsymbol{\xi}}\Big)=i\sqrt{N}\nu\Big((\epsilon^{r}\epsilon^{n+1}-\bar{q})e^{i\mathbf{u}^\intercal\boldsymbol{\xi}^-}\Big)\\-\sum_{a<b}^nu_{ab}\nu\Big((\epsilon^{r}\epsilon^{n+1}-\bar{q})\epsilon^{a}\epsilon^{b} e^{i\mathbf{u}^\intercal\boldsymbol{\xi}^-}
    \Big)+O(N^{-1/2})\,.
\end{multline}
The last term can be approximated by means of \eqref{main_estimate1}:
\begin{multline}
    \sum_{a<b}^nu_{ab}\nu\Big((\epsilon^{r}\epsilon^{n+1}-\bar{q})\epsilon^{a}\epsilon^{b} e^{i\mathbf{u}^\intercal\boldsymbol{\xi}^-}
    \Big)=\sum_{a<b}^nu_{ab}\nu_0\Big((\epsilon^{r}\epsilon^{n+1}-\bar{q})\epsilon^{a}\epsilon^{b} e^{i\mathbf{u}^\intercal\boldsymbol{\xi}^-}
    \Big)+O(N^{-1/2})=\\=\sum_{a<b}^nu_{ab}\,a^{ab}_{r,n+1}
    \phi_0^-(\mathbf{u})+O(N^{-1/2})=(\mathbf{A}\mathbf{u})_{r,n+1} \phi_0^-(\mathbf{u})+O(N^{-1/2})\,.
\end{multline}
The first term needs instead the finer estimate \eqref{main_estimate2} and \eqref{derivata_nu_gen}:
\begin{multline}
    i\sqrt{N}\nu\Big((\epsilon^{r}\epsilon^{n+1}-\bar{q})e^{i\mathbf{u}^\intercal\boldsymbol{\xi}^-}\Big)=i\sqrt{N}\Big[
    \lambda\sum_{a<b}^{n+1}\nu_0\Big((\epsilon^{r}\epsilon^{n+1}-\bar{q})\epsilon^a\epsilon^b(q_{ab}^--\bar{q}) e^{i\mathbf{u}^\intercal\boldsymbol{\xi}^-}\Big)\\
    -\lambda n\sum_{a=1}^{n+1}\nu_0\Big((\epsilon^{r}\epsilon^{n+1}-\bar{q})\epsilon^{a}\epsilon^{n+2}(q_{a,n+2}^--\bar{q}) e^{i\mathbf{u}^\intercal\boldsymbol{\xi}^-}\Big)\\+\frac{\lambda n(n+1)}{2}\nu_0\Big((\epsilon^{r}\epsilon^{n+1}-\bar{q})\epsilon^{n+2}\epsilon^{n+3}(q_{n+2,n+3}^--\bar{q}) e^{i\mathbf{u}^\intercal\boldsymbol{\xi}^-}\Big)
    \Big]+O(N^{-1/2})\,.
\end{multline}
We treat the three contributions in the three lines of the previous equation separately, keeping in mind that under $\nu_0$ the $\epsilon$-particle is factorized out. Concerning the first, by splitting the sum we have
\begin{align}
    i\sqrt{N}\sum_{a<b}^{n+1}\nu_0\Big((\epsilon^{r}\epsilon^{n+1}-\bar{q})\epsilon^a\epsilon^b(q_{ab}^--\bar{q}) e^{i\mathbf{u}^\intercal\boldsymbol{\xi}^-}\Big)=\sum_{a<b}^n a_{r,n+1}^{ab}\partial_{u_{ab}}\phi_0^-(\mathbf{u})+\sum_{a=1}^n a_{r,n+1}^{a,n+1}\psi_{0,a}^-(\mathbf{u})\,.
\end{align}
The second contribution in the second line yields instead
\begin{align}
    i\sqrt{N}\sum_{a=1}^{n+1}\nu_0\Big((\epsilon^{r}\epsilon^{n+1}-\bar{q})\epsilon^{a}\epsilon^{n+2}(q_{a,n+2}^--\bar{q}) e^{i\mathbf{u}^\intercal\boldsymbol{\xi}^-}\Big)=\sum_{a=1}^n a_{r,n+1}^{a,n+2}\psi_{0,a}^-(\mathbf{u})+a_{r,n+1}^{n+1,n+2}\zeta_0^-(\mathbf{u})\,.
\end{align}
Finally:
\begin{align}
    \nu_0\Big((\epsilon^{r}\epsilon^{n+1}-\bar{q})\epsilon^{n+2}\epsilon^{n+3}(q_{n+2,n+3}^--\bar{q}) e^{i\mathbf{u}^\intercal\boldsymbol{\xi}^-}\Big)=a_{r,n+1}^{n+2,n+3}\zeta_0^-(\mathbf{u})\,.
\end{align}
Putting everything together we get a cavity approximation for $\vec{\psi}$:
\begin{align}
    \psi_{N,r}(\mathbf{u})=-(\mathbf{A}\mathbf{u})_{r,n+1}\,\phi_0^-(\mathbf{u})+\lambda \mathbf{B}_{r,n+1}\cdot\boldsymbol{\Gamma}_0^-(\mathbf{u})+O(N^{-1/2})
\end{align}
with $\mathbf{B}_{r,n+1}$ as in \eqref{Breplicas2}.

We now finally turn to
\begin{multline}
    \zeta_N(\mathbf{u})=i\sqrt{N}\nu\Big((\epsilon^{n+1}\epsilon^{n+2}-\bar{q})e^{i\mathbf{u}^\intercal\boldsymbol{\xi}}\Big)=i\sqrt{N}\nu\Big((\epsilon^{n+1}\epsilon^{n+2}-\bar{q})e^{i\mathbf{u}^\intercal\boldsymbol{\xi}^-}\Big)\\
    -\sum_{a<b}^nu_{ab}\nu\Big(\epsilon^a\epsilon^b(\epsilon^{n+1}\epsilon^{n+2}-\bar{q})e^{i\mathbf{u}^\intercal\boldsymbol{\xi}^-}\Big)+O(N^{-1/2})\,.
\end{multline}
As before, we first focus on the last term
\begin{align}
    \nu\Big(\epsilon^a\epsilon^b(\epsilon^{n+1}\epsilon^{n+2}-\bar{q})e^{i\mathbf{u}^\intercal\boldsymbol{\xi}^-}\Big)=\sum_{a<b}u_{ab}a^{ab}_{n+1,n+2}\,\phi_0^-(\mathbf{u})+O(N^{-1/2})
\end{align}
where we used \eqref{main_estimate1}. Secondly:
\begin{multline}
    i\sqrt{N}\nu\Big((\epsilon^{n+1}\epsilon^{n+2}-\bar{q})e^{i\mathbf{u}^\intercal\boldsymbol{\xi}^-}\Big)
    =i\sqrt{N}\Big[\lambda\sum_{a<b}^{n+2}\nu_0\Big((\epsilon^{n+1}\epsilon^{n+2}-\bar{q})\epsilon^a\epsilon^b(q_{ab}^--\bar{q}) e^{i\mathbf{u}^\intercal\boldsymbol{\xi}^-}\Big)\\
    -\lambda (n+1)\sum_{a=1}^{n+2}\nu_0\Big((\epsilon^{n+1}\epsilon^{n+2}-\bar{q})\epsilon^a\epsilon^{n+3}(q_{a,n+3}^--\bar{q}) e^{i\mathbf{u}^\intercal\boldsymbol{\xi}^-}\Big)\\
    +\frac{\lambda (n+2)(n+1)}{2}\nu_0\Big((\epsilon^{n+1}\epsilon^{n+2}-\bar{q})\epsilon^{n+3}\epsilon^{n+4}(q_{n+3,n+4}^--\bar{q}) e^{i\mathbf{u}^\intercal\boldsymbol{\xi}^-}\Big)\Big]+O(N^{-1/2}),.
\end{multline}
The contribution in the first line yields
\begin{multline}
    i\sqrt{N}\sum_{a<b}^{n+2}\nu_0\Big((\epsilon^{n+1}\epsilon^{n+2}-\bar{q})\epsilon^a\epsilon^b(q_{ab}^--\bar{q}) e^{i\mathbf{u}^\intercal\boldsymbol{\xi}^-}\Big)=
    \sum_{a<b}^na^{ab}_{n+1,n+2}\partial_{u_{ab}}\phi_0^-(\mathbf{u})\\
    +\sum_{a=1}^{n}a_{n+1,n+2}^{a,n+1}\psi_{0,a}^-(\mathbf{u})
    +\sum_{a=1}^{n}a_{n+1,n+2}^{a,n+2}\psi_{0,a}^-(\mathbf{u})
    +a_{n+1,n+2}^{n+1,n+2}\zeta_0^-(\mathbf{u})=\\
    =\sum_{a<b}^na^{ab}_{n+1,n+2}\partial_{u_{ab}}\phi_0^-(\mathbf{u})
    +2\sum_{a=1}^{n}a_{n+1,n+2}^{a,n+1}\psi_{0,a}^-(\mathbf{u})+a_{n+1,n+2}^{n+1,n+2}\zeta_0^-(\mathbf{u})\,.
\end{multline}
The contribution in the second line is
\begin{align}
    i\sqrt{N}\sum_{a=1}^{n+2}\nu_0\Big((\epsilon^{n+1}\epsilon^{n+2}-\bar{q})\epsilon^a\epsilon^{n+3}(q_{a,n+3}^--\bar{q}) e^{i\mathbf{u}^\intercal\boldsymbol{\xi}^-}\Big)
    =\sum_{a=1}^na_{n+1,n+2}^{a,n+3}\psi_{0,a}^-(\mathbf{u})+2a_{n+1,n+2}^{n+1,n+3}\zeta_0^-(\mathbf{u})\,.
\end{align}
The third line contributes just with
\begin{align}
    i\sqrt{N}\nu_0\Big((\epsilon^{n+1}\epsilon^{n+2}-\bar{q})\epsilon^{n+3}\epsilon^{n+4}(q_{n+3,n+4}^--\bar{q}) e^{i\mathbf{u}^\intercal\boldsymbol{\xi}^-}\Big)=a_{n+1,n+2}^{n+3,n+4}\zeta_0^-(\mathbf{u})\,.
\end{align}
For the last time, gathering all the contributions we have
\begin{align}
    \zeta_N(\mathbf{u})=-(\mathbf{A}\mathbf{u})_{n+1,n+2}\phi_0^-(\mathbf{u})+\lambda\mathbf{B}_{n+1,n+2}\cdot\boldsymbol{\Gamma}_0^-(\mathbf{u})+O(N^{-1/2})
\end{align}
with
\begin{align}
    \mathbf{B}_{n+1,n+2}=(B_{n+1,n+2}^{ab})=
    \begin{cases}
    a_{n+1,n+2}^{ab}\,,\quad a<b\leq n\\
    2a_{n+1,n+2}^{a,n+1}-(n+1)a_{n+1,n+2}^{a,n+3}\,,\quad a\leq n\,,b=n+1\\
    \frac{(n+1)(n+2)}{2}a_{n+1,n+2}^{n+3,n+4}-2(n+1)a_{n+1,n+2}^{n+1,n+3}\,,\quad (a,b)=(n+1,n+2)
    \end{cases}\,.
\end{align}
Hence, we have just proven that the following linear system of equations holds
\begin{align}
    \boldsymbol{\Gamma}_N(\mathbf{u})=-\mathbf{A}\mathbf{u}\phi_0^-(\mathbf{u})+\lambda\mathbf{B}\boldsymbol{\Gamma}_0^-(\mathbf{u})+O(N^{-1/2})\,.
\end{align}

\printbibliography

\end{document}